\newcommand{\spc}{\quad \quad \quad}
\def\be{\begin{equation}}
\def\ee{\end{equation}}
\def\beq{\begin{eqnarray}}
\def\eeq{\end{eqnarray}}
\theoremstyle{definition}
\theoremstyle{theorem}
\newtheorem{theorem}{Theorem}
\theoremstyle{corollary}
\begin{document}
\title{Gapless non-hydrodynamic modes in relativistic kinetic theory}
\author{L.~Gavassino}
\email{lorenzo.gavassino@vanderbilt.edu}
\affiliation{Department of Mathematics, Vanderbilt University, Nashville, TN 37211, USA}

\begin{abstract}
We provide rigorous criteria to determine whether the non-hydrodynamic sector of a relativistic kinetic theory is gapless. These general criteria apply to both Boltzmann's equation and approximate models thereof, provided that the latter are consistent with thermodynamic principles. As an application of these criteria, we prove that the non-hydrodynamic sector is gapless whenever the scattering cross-section decays to zero at large energies. Physically, this happens because, if particles with very high energy (compared to the temperature) are free streaming, we can use them to build hot non-hydrodynamic waves, which live longer than any hydrodynamic wave. Since many standard cross-sections in quantum field theory vanish at high energies, the existence of these non-thermal long-lived waves is a rather general feature of relativistic systems. 
\end{abstract}

\maketitle

\noindent \textit{Introduction -} The existence of hydrodynamics usually relies on a ``separation of timescales'' assumption \cite{Kadanoff1963,huang_book,LindblomRelaxation1996,Hohenberg1977,Denicol2012Boltzmann,Hydro+2018,GavassinoFronntiers2021,Rocha:2023ilf}, which can be summarised as follows: Non-conserved quantities equilibrate (through collisions) over microscopic timescales, while conserved quantities equilibrate (through transport) over macroscopic timescales \cite{Glorioso2018}. To appreciate the rationale of this assumption, consider the following example. Let $\rho(t,x^j)$ be the coarse-grained energy density of a gas of photons immersed in a medium. Suppose that interactions with the medium happen through elastic scattering, so $\rho$ is a conserved density, and we can write a continuity equation $\partial_t \rho +\partial_j F^j=0$, with $F^j$ the energy flux \cite{Levermore1984}. If scatterings are frequent (compared to the evolution timescale of $\rho$), we can average the motion of photons over a few mean free times, and treat it as a random walk. Then, $F^j$ is uniquely determined in terms of $\rho$ by Fick's law, $F^j=-D\partial^j \rho$ \cite{mihalas_book}, resulting in a hydrodynamic equation for the density: $\partial_t \rho=D \partial_j \partial^j \rho$. If, instead, the interactions are rare, we cannot construct a differential equation in terms of $\rho$ alone, because $F^j$ depends on infinitely many microscopic degrees of freedom (namely, the initial direction of propagation of each individual photon). Thus, a self-contained macroscopic description of the flow does not exist.

The tendency of gases and liquids to exhibit a scale separation at small gradients is often taken for granted \cite{landau6,rezzolla_book,andersson2007review,Romatschke2017}. The standard argument goes as follows: Let $\tau_{\text{hy}}$ be the decay timescale of a hydrodynamic wave of interest (e.g. a soundwave), and let $\tau_{\text{non-hy}}$ be the equilibration time of the longest-living collective excitation that is not described by hydrodynamics (i.e. that is not protected from decaying by conservation laws). If
\begin{equation}\label{theassumption}
    \tau_{\text{non-hy}} \ll \tau_{\text{hy}} \, ,
\end{equation}
then all the ``non-hydrodynamic excitations'' decay faster than the hydrodynamic wave, which will dominate the late-time behavior. But since the damping factor of hydrodynamic waves due to viscosity scales like $\sim e^{-k^2 t}$, where $k$ is the wave number, one can make $\tau_{\text{hy}}\sim k^{-2}$ arbitrarily large by taking $k\rightarrow 0$. Thus, at sufficiently small gradients (i.e. at sufficiently large lengthscales), $\tau_{\text{hy}}$ will outlive all non-hydrodynamic excitations, and  \eqref{theassumption} must hold.

Unfortunately, the above argument has a flaw \cite{KovtunStickiness2011,kovtun_lectures_2012}. At infinitely large lengthscales, a fluid cell possesses infinitely many microscopic degrees of freedom. Thus, there may be an infinite sequence of collective excitations, whose equilibration times $\{\tau_n\}_{n\in \mathbb{N}}$ diverge at large $n$. Then, $\tau_{\text{non-hy}}=\infty$, and \eqref{theassumption} never occurs. For example, suppose that $\tau_n=n$. Then, if we assign to each excitation an initial amplitude $1/n^2$, the total perturbation decays like
\begin{equation}\label{fluctuzzuz}
    \sum_{n=1}^{\infty} \dfrac{e^{-t/n}}{n^2} \, \, \, \stackrel{\text{large } t}{\approx} \, \, \, \int_0^{+\infty} \dfrac{e^{-t/x}}{x^2} dx = \dfrac{1}{t} \, ,
\end{equation}
which survives longer than any soundwave (at non-zero viscosity). This possibility was extensively studied for non-relativistic gases \cite{Grad1963,McLennan1965,Caflish1980,CatlfishII1980}. There, it was shown that, if the collision cross-section decays to zero at high energies\footnote{In the mathematics literature, interactions of this type are called ``soft potentials'' \cite{Caflish1980,Strain2010}.}, then there is a continuous infinity of non-hydrodynamic excitations whose equilibration times are unbounded above. It was also shown that continuous superpositions of these excitations decay like ${\sim} e^{-t^b}$ (with $b{<}1$), outliving soundwaves. Whether realistic QFT interactions give rise to similar phenomena in relativistic systems is an open problem, known as the ``poles or cut?'' dilemma \cite{Moore:2018mma} (poles: $\tau_{\text{non-hy}}{<}\infty$; cut: $\tau_{\text{non-hy}}{=}\infty$). Several hints point toward ``cut'', coming from qualitative models \cite{Kurkela:2017xis,Gavassino2024InfiniteOrder,RochaGavassinoFluctu:2024afv} and numerical experiments \cite{Moore:2018mma,Ochsenfeld:2023wxz,RochaCut2024cge}. Here, we provide, for the first time, a kinetic-theory discussion that is both mathematically rigorous and physically intuitive, as it relies solely on general principles and quick estimates. We find that, in most cases, the answer is indeed ``cut''. From a mathematician's perspective, this is related to the fact that most QFT interactions happen to be ``soft'', in the sense of \cite{Dudynski1988} (see also \cite{Hu:2024tnn}).


The metric signature $(-,+,+,+)$ is adopted throughout, and we work in natural units: $c=\hbar=k_B=1$.

\newpage

\noindent \textit{Intuitive argument -} Consider a relativistic gas of particles with mass $m$, density $J^0$, and temperature $T$. Suppose that one particle in this gas has energy much higher than the thermal energy, i.e. $p^0 \gg \sqrt{m^2{+}T^2} \equiv E_{\text{th}}$. The probability that this particle travels freely for a time interval $t$ without interacting is $\mathcal{P}_s(t)= e^{-J^0 \upvarsigma t}$ \cite{Lechner:2018bzx,TavernierBook}, where $\upvarsigma$ is the mean collision cross-section. We take, as a ``realistic'' (i.e. QFT-inspired) high-energy interaction, the cross-section of $\lambda \varphi^4$ scalar field theory, $\upvarsigma \sim g/E_{\text{CM}}^2$, where $g$ is a constant, and $E_{\text{CM}}$ is the energy in the center of momentum frame \cite{Peskin_book}. For collisions between our high-energy particle and the rest of the gas, we have on average $E_{\text{CM}}^2 \approx 4E_{\text{th}} p^0$, giving \cite{Denicol:2022bsq}
\begin{equation}\label{jet}
\mathcal{P}_s(t)\approx \exp\bigg[{-}\dfrac{J^0 gt}{4E_{\text{th}} p^0}\bigg] \, .  
\end{equation}
Now, suppose that, instead of one high-energy particle, we have a population of such particles. If these are sufficiently diluted, we can treat each particle independently. Thus, working in the thermodynamic limit, imagine that there are $N$ particles with energy $p^0 (\gg E_{\text{th}})$, plus other $N/2^3$ particles with energy $2p^0$, plus other $N/3^3$ particles with energy $3p^0$, plus other $N/4^3$ particles with energy $4p^0$..., and so on (up to infinity). Then, the total energy associated with hard particles that have traveled freely without interacting from time 0 to time $t$ is a series, whose sum is finite:
\begin{equation}\label{poWeer}
\begin{split}
 E_{\text{non-hy}}(0) ={}& \pi^2 Np^0/6 \, , \\
E_{\text{non-hy}}(t) ={}& \sum_{n=1}^{\infty} \dfrac{Np^0}{n^2} \exp\bigg[{-}\dfrac{J^0 gt}{4E_{\text{th}} np^0} \bigg]  \, \, \, \stackrel{\text{large } t}{\approx} \, \, \, \dfrac{4NE_{\text{th}}(p^0)^2}{J^0gt} \, .\\
\end{split}
\end{equation}
Compare this situation with the photon gas in the introduction: $E_{\text{non-hy}}$ is the energy carried by particles that have traveled freely across the medium \textit{without} relaxing to a hydrodynamic constitutive relation. No macroscopic law relates the large-scale flux of $E_{\text{non-hy}}$ to large-scale densities. The amount of this ``non-hydrodynamic energy'' falls like $1/t$, while hydrodynamic waves decay exponentially. Therefore, free-streaming hard particles dominate the late-time transport, and $\tau_{\text{non-hy}}=\infty$. Similar conclusions hold in any gas where $\upvarsigma {\rightarrow} 0$ at large $E_{\text{CM}}$.

In summary: If the high-energy tails of the kinetic distribution are out of equilibrium, 
and the cross-section decays to zero at large energies (as in models of QCD thermalization \cite{Dusling:2009df,Dusling:2011fd,Kurkela:2017xis,Rocha:2021zcw}), high-energy particles form a non-hydrodynamic excitation that equilibrates slower than soundwaves, and $\tau_{\text{non-hy}}{=}\infty$. Note that also a small (e.g. an exponential) non-equilibrium tail lives longer than a sound wave (see Supplementary Material for two explicit examples).

\noindent \textit{The non-hydrodynamic sector is gapless -} Let $G(t{-}t')$ be the retarded linear-response Green function of the shear stress component $\uppi_{12}(t)$ induced by some external force $F(t')$, in the spatially homogenous limit. The singularities of its Fourier transform, $G(\omega)$, are the eigenfrequencies of the system, which (in kinetic theory)  lay on the imaginary axis of the complex $\omega$-plane \cite{Denicol_Relaxation_2011,GavassinoNonHydro2022}. Thus, by deforming the integration path as in figure \ref{fig:BranchCut}, we obtain \cite{WagnerGavassino2023jgq}
\begin{figure}
\begin{center}
\includegraphics[width=0.48\textwidth]{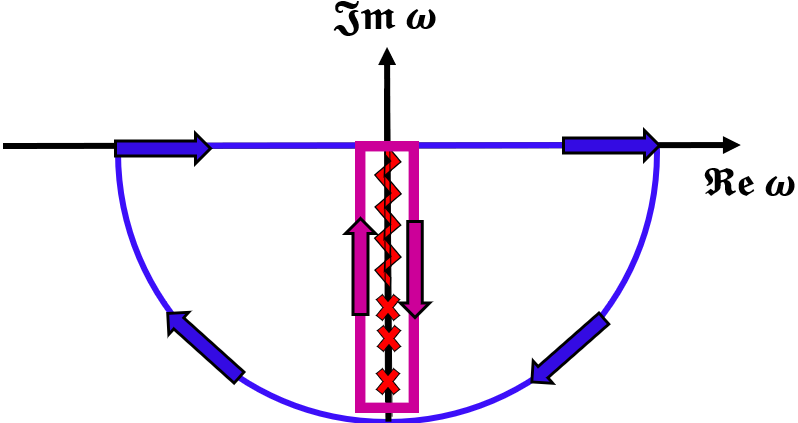}
	\caption{To evaluate the integral $G(t)=(2\pi)^{-1}\int_{\mathbb{R}} G(\omega)e^{-i\omega t}d\omega$ for $t>0$, we close the path for negative $\mathfrak{Im} \, \omega$ (blue line), since $e^{-i\omega t}$ decays exponentially there. Then, we shrink the path around the singularities (purple line). In the limit in which the vertical lines approach the imaginary axis, we obtain \eqref{sgagnatchz}, with $z(\nu)= (2\pi i)^{-1} \big[G(0^+{-}i\nu)-G(0^-{-}i\nu) \big]$ .}
	\label{fig:BranchCut}
	\end{center}
\end{figure}
\begin{equation}\label{sgagnatchz}
    G(t)=\Theta(t)\int_0^{+\infty} z(\nu) e^{-\nu t} d\nu\, ,
\end{equation}
where $z(\nu)$ is the spectral density of singularities at $\omega =-i\nu$. Comparing \eqref{jet} with \eqref{sgagnatchz} as $\nu \rightarrow 0$, we see that $z(\nu)$ is proportional to the density of hard particles per unit $(p^0)^{-1}\propto \nu$ that are disturbed by $F$. If the force $F$ couples with particles at all energies (this is the case of deformation rates \cite[Eq. (48)]{Denicol_Relaxation_2011}), $z(\nu)$ is a continuous function near $\nu{=}0$ \cite{Kurkela:2017xis,Ochsenfeld:2023wxz} (and not a sum of Dirac deltas), meaning that $G(\omega)$ has a branch cut touching the origin. Hence, the $\lambda \varphi^4$ gas has infinite non-hydrodynamic modes accumulating at $\omega{=}0$, making the non-hydrodynamic sector gapless \cite{Moore:2018mma,RochaCut2024cge}.

\newpage

In the rest of this Letter, we provide rigorous theorems supporting the above arguments, for general interactions. 

\noindent \textit{Formulation of the problem -} We consider a relativistic gas of classical\footnote{The long-lived modes of interest are high-energy tails of the distribution. Since $f_p {\ll} 1$ when $p^0 {\rightarrow} \infty$, quantum statistics may be neglected.} particles of mass $m$ (which may vanish), with kinetic distribution $f_p(x^\mu)$. The relativistic Boltzmann equation \cite{Groot1980RelativisticKT} has the form $p^\mu \partial_\mu f_p=\mathcal{C}[f_p]$, where $\mathcal{C}$ is some collision functional (it may be Boltzmann's collision integral or some approximation thereof \cite{Cercignani}). Fixed a uniform equilibrium state $f_p^{\text{eq}}=e^{\alpha +\beta_\nu p^\nu}$ ($\alpha$, $\beta_\nu$ constant), we make the decomposition $f_p=f_p^{\text{eq}}(1+\phi_p)$, and we linearise in $\phi_p$. This results in the evolution equation $p^\mu \partial_\mu \phi_p=L\phi_p$, where $L$ is a linear operator (whose domain of definition will be specified later). Fixed one reference frame, not necessarily at rest with respect to the gas, we restrict our attention to the homogenous problem\footnote{The main results extend by continuity to states with $\phi_p \propto e^{ikx^1}$ ($k$ small), because the operator $p^1/p^0$ is bounded \cite[Chapter 5, \S 4.3]{Kato_book}.}, thereby dropping the term $p^j \partial_j \phi_p$. Hence, have
\begin{equation}\label{frittata}
 \partial_t \phi_p =  (p^0)^{-1} L \phi_p \spc \Longrightarrow \spc \phi_p(t)=e^{(p^0)^{-1}L t}\phi_p(0) \, .
\end{equation}
Our goal in this Letter is to discuss the properties of the operator $(p^0)^{-1}L$, which generates the homogenous evolution.
We stress the importance of keeping the factor $(p^0)^{-1}$, since the spectral properties of $L$ may be very different from those of $(p^0)^{-1}L$. For example, if we take $L=-1$, whose spectrum is just $\{-1\}$,  the spectrum of $(p^0)^{-1}L=-(p^0)^{-1}$ is continuous, covering the interval $[-m^{-1},0]$.

\noindent \textit{Features of the collision operator -} We require $L$ to have the following properties (taking $\phi_p \in \mathbb{C}$ for later convenience):
\begin{flalign}
\text{Equilibrium states:}    && \label{Conservation}  & L 1 =L p^\nu=0  \, ,  &\\  
\text{Dissipation:}    && \label{2ndlaw}   
 & \int \dfrac{d^3 p}{(2\pi)^3 p^0} f_p^{\text{eq}} \phi_p^* L \phi_p  \leq 0 \, , &\\ 
\text{Onsager Symmetry:}    && \label{Onsager}   & \bigg[\int \dfrac{d^3 p}{(2\pi)^3 p^0} f_p^{\text{eq}} \psi_p^* L \phi_p\bigg]^*  = \int \dfrac{d^3 p}{(2\pi)^3 p^0} f_p^{\text{eq}} \phi_p^* L \psi_p \, , & 
\end{flalign}
all of which are fulfilled by Boltzmann's collision integral \cite{cercignani_book} (see Supplementary Material for the proof). Equation \eqref{Conservation} expresses the requirement that all equilibrium states $f_p=e^{\tilde{\alpha} +\tilde{\beta}_\nu p^\nu}$ (with $\tilde{\alpha}$ and $\tilde{\beta}_\nu$ constant) must be solutions of the Boltzmann equation. In fact, taking $\tilde{\alpha}=\alpha{+}a$ and $\tilde{\beta}_\nu = \beta_\nu {+}b_\nu$, and linearizing in $a$ and $b_\nu$, we obtain $\phi_p=a{+}b_\nu p^\nu$ which, plugged into the linearised Boltzmann equation, gives $L[a{+}b_\nu p^\nu]=p^\mu \partial_\mu (a{+}b_\nu p^\nu)=0$. To derive \eqref{2ndlaw}, one expresses the entropy production in terms of the information current, $\varsigma=-\partial_\mu E^\mu \geq 0$, and imposes the second law of thermodynamics, $\varsigma \geq 0$. In kinetic theory, the information current $E^\mu$ is known \cite{GavassinoCausality2021,RochaGavassinoFluctu:2024afv}, and we obtain
\begin{equation}
    E^\mu =\dfrac{1}{2} \int \dfrac{d^3 p}{(2\pi)^3 p^0} f_p^{\text{eq}} p^\mu |\phi_p|^2 \, , \spc
     \varsigma = - \mathfrak{Re} \int \dfrac{d^3 p}{(2\pi)^3 p^0} f_p^{\text{eq}} \phi_p^* L \phi_p \, .
\end{equation}
Equation \eqref{Onsager} arises from microscopic time-reversal invariance \cite{GavassinoNonHydro2022,GavassinoCasmir2022}, and allows us to drop the real part in the formula for $\varsigma$. Combining \eqref{Conservation} and \eqref{Onsager}, one finds that the perturbed particle current and stress-energy tensor,
\begin{equation}\label{cool!}
\delta J^\mu = \int \dfrac{d^3 p}{(2\pi)^3p^0} f_p^{\text{eq}} p^\mu \phi_p \, , \spc
\delta T^{\mu \nu} = \int \dfrac{d^3 p}{(2\pi)^3 p^0}  f_p^{\text{eq}} p^\mu p^\nu \phi_p \, ,
\end{equation}
automatically obey the conservation laws $\partial_\mu \delta J^\mu=\partial_\mu \delta T^{\mu \nu}=0$.

\noindent \textit{A convenient Hilbert space -} To establish rigorous results, we must fix a space of functions. We choose the (separable) Hilbert space $\mathcal{H}=L^2(\mathbb{R}^3, f^{\text{eq}}_p)$, namely the space of (complex) square-integrable functions on $\mathbb{R}^3$ with respect to the measure $f^{\text{eq}}_p d^3 p$. The associated inner product is
\begin{equation}\label{sguec}
    (\psi_p,\phi_p) = \int \dfrac{d^3 p}{(2\pi)^3} f_p^{\text{eq}} \psi_p^* \phi_p \, ,
\end{equation}
with corresponding norm $|| \phi_p ||=\sqrt{(\phi_p,\phi_p)}=\sqrt{2E^0}$. The space $\mathcal{H}$ is the set of homogeneous states with finite free energy density \cite{GavassinoGENERIC2022,Mullins:2023tjg,Mullins:2023bur}, and we use the free energy as a square norm\footnote{Textbooks \cite{Groot1980RelativisticKT,cercignani_book} use a different inner product, replacing our measure $f^{\text{eq}}_p d^3p$ with the invariant measure $f^{\text{eq}}_p d^3 p/p^0$. This convention presents a serious problem: If the initial state has finite norm in $L^2(\mathbb{R}^3,f^{\text{eq}}_p/p^0)$, there is no guarantee that such norm will remain finite at later times. By contrast, the second law forces $E^0(t)$ to be non-increasing. Thus, if $\phi_p(0)\in \mathcal{H}$, then $\phi_p(t)\in \mathcal{H}$ for all $t \geq 0$. Some readers may feel that, in relativity, the invariant measure $d^3 p/p^0$ is ``more natural''. This feeling is an artifact of the homogenous limit. Thinking of the inhomogeneous case, it is clear that the norm depends on the Cauchy surface $\Sigma$ upon which we define the state, and thus should \textit{not} be Lorentz-invariant \cite{Hishcock1983,GerochLindblom1990,Geroch_Lindblom_1991_causal}. Indeed, using $2E^0$ as a squared norm is more natural, since it generalizes to $2\int_\Sigma E^\mu d\Sigma_\mu$ in the inhomogeneous case, and we can use Gauss' theorem \cite{MTW_book} to link norms on different Cauchy surfaces: $|| \phi_p ||^2_{\Sigma_2}-|| \phi_p ||^2_{\Sigma_1}=-\int 2\varsigma d^4 x$.} \cite{Dudynski1988}. Clearly, any physically meaningful perturbation $\phi_p$ should belong to $\mathcal{H}$. Indeed, all the products $p^{\nu_1}p^{\nu_2}...p^{\nu_l}$ belong to $\mathcal{H}$, meaning that all the moments $\delta \rho^{0\nu_1 \nu_2...\nu_l}=(p^{\nu_1}p^{\nu_2}...p^{\nu_l},\phi_p)$ are well-defined inside $\mathcal{H}$, including $\delta J^0=(1,\phi_p)$ and $\delta T^{0\nu}=(p^\nu,\phi_p)$.

The inner product \eqref{sguec} allows us to rewrite  \eqref{2ndlaw} and \eqref{Onsager} as follows:
 \begin{equation}\label{bazinga}
    \varsigma= -\bigg(\phi_p , \dfrac{1}{p^0}L \phi_p \bigg)\geq 0 \, , \spc \bigg(\psi_p , \dfrac{1}{p^0}L \phi_p \bigg)^*= \bigg(\phi_p, \dfrac{1}{p^0}L\psi_p \bigg) \, ,
\end{equation}
meaning that the operator $(p^0)^{-1} L$ is Hermitian negative-semidefinite. Furthermore, for $(p^0)^{-1} L$ to be physically meaningful, it must be densely defined in $\mathcal{H}$, since all functions in $\mathcal{H}$ are acceptable physical states\footnote{For sufficiently regular cross-sections, Boltzmann's collision operator is well defined on $C^\infty$ functions with compact support. These are dense in $\mathcal{H}$, making Boltzmann's operator densely defined, see e.g. \cite{Dudynski1988,Strain2010,Speck2011} and references therein. In the following, we will adopt the usual physics convention of saying that  ``statement $X$ holds in Hilbert space $Y$'' whenever $X$ holds in a \textit{dense linear subset} of $Y$.}. The above facts allow us to apply the Friederichs extension (Theorem 5.1.13 of \cite{Pedersenbook}), and promote $(p^0)^{-1}L$ to a self-adjoint operator.

\noindent \textit{A key theorem -} Let us decompose $\mathcal{H}=\mathcal{H}_{\text{hy}} \bigoplus \mathcal{H}_{\text{non-hy}}$, where $\mathcal{H}_{\text{hy}}$ is the space of states that are described by hydrodynamics, and $\mathcal{H}_{\text{non-hy}}$ is the space of non-hydrodynamic excitations. Since we are working in the homogeneous limit, all hydrodynamic ``waves'' take the form of global equilibria $\phi_p=a+b_\nu p^\nu$, so that $\mathcal{H}_{\text{hy}}=\text{span}\{1,p^\nu\}$.
Conversely, the non-hydrodynamic excitations are those states that are ``invisible'' to hydrodynamics. Since hydrodynamics is only aware of the conserved densities, the non-hydrodynamic excitations are the states such that $\delta J^0=\delta T^{0\nu}=0$. Recalling that $\delta J^0=(1,\phi_p)$ and $\delta T^{0\nu}=(p^\nu,\phi_p)$, we conclude that $\mathcal{H}_{\text{non-hy}}=\mathcal{H}_{\text{hy}}^\perp$.
Therefore, $\mathcal{H}_{\text{non-hy}}$ is a Hilbert space in its own right. Furthermore, $\mathcal{H}_{\text{non-hy}}$ is an invariant space of $(p^0)^{-1}L$. This follows from the second equation of \eqref{bazinga}: Just take $\phi_p \in \mathcal{H}_{\text{non-hy}}$ and $\psi_p=a+b_\nu p^\nu$, so that, using \eqref{Conservation},
\begin{equation}
  \bigg(a{+}b_\nu p^\nu , \dfrac{1}{p^0}L \phi_p \bigg)^*= \bigg(\phi_p, \dfrac{1}{p^0}L[a{+}b_\nu p^\nu]  \bigg) =0 \spc \Longrightarrow  \spc  \frac{1}{p^0} L \phi_p \in \mathcal{H}_{\text{hy}}^\perp=\mathcal{H}_{\text{non-hy}} \, .
\end{equation}
Thus, we can restrict the operator $(p^0)^{-1} L$ to a self-adjoint operator on the Hilbert subspace $\mathcal{H}_{\text{non-hy}}$ of pure deviations from local thermodynamic equilibrium. Then, we have the following result, whose proof is provided in the Supplementary Material.

\begin{theorem}[\textbf{Gaplessness Criteria}]\label{theo1}
Suppose that $1$ and $p^\nu$ are the only collisional invariants belonging to $\mathcal{H}$, i.e. $\textup{Ker}(L)= \mathcal{H}_{\textup{hy}}$.  Then, the following facts are equivalent:
\begin{itemize}
\item[\textup{(a)}] We can construct non-hydrodynamic excitations that survive as long as we want, i.e. for any time $t>0$,
\begin{equation}\label{primocriterio}
   \sup_{\phi_p \in \mathcal{H}_{\textup{non-hy}}}  \dfrac{|| \phi_p(t)||}{|| \phi_p(0) ||} =1 \, .
\end{equation}
\item[\textup{(b)}] We can construct non-hydrodynamic states $\phi_p$ such that $||(p^0)^{-1}L\phi_p ||$ is arbitrarily smaller than $|| \phi_p ||$, i.e.
\begin{equation}\label{superuzzuz}
    \inf_{\phi_p \in \mathcal{H}_{\textup{non-hy}}} \dfrac{|| (p^0)^{-1} L \phi_p ||}{||\phi_p||} =0 \, .
\end{equation} 
\item[\textup{(c)}] We can construct non-hydrodynamic states that produce negligibly small entropy, i.e.
\begin{equation}\label{entropona}
    \inf_{\phi_p \in \mathcal{H}_{\textup{non-hy}}} \dfrac{\varsigma}{E^0} =0 \, .
\end{equation}
\item[\textup{(d)}] Working within the space $\mathcal{H}_{\textup{non-hy}}$, the frequency $\omega=0$ is an accumulation point of the spectrum of $(p^0)^{-1}L$. Namely, for any $\epsilon>0$, there is a non-hydrodynamic frequency $\omega \neq 0$ such that $|\omega|<\epsilon$.
\end{itemize}
\end{theorem}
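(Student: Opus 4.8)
The plan is to reduce all four statements to a single spectral quantity via the spectral theorem for the self-adjoint, negative-semidefinite operator $A:=(p^0)^{-1}L$ restricted to $\mathcal{H}_{\text{non-hy}}$, whose self-adjointness was established above. I would write its spectral resolution in the variable $\nu=-\omega\geq 0$, so that $A=-\int_0^\infty \nu\, dE_\nu$ with projection-valued measure $E$, and associate to each state the scalar spectral measure $\mu_\phi(B):=(\phi,E(B)\phi)$. The hypothesis $\text{Ker}(L)=\mathcal{H}_{\text{hy}}$ guarantees that $A$ has \emph{trivial} kernel on $\mathcal{H}_{\text{non-hy}}$, so that $\nu=0$ carries no point mass; consequently $0$ can belong to $\sigma(A)$ only as a non-isolated (continuous/essential) spectral point. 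The whole theorem then hinges on the single number $g:=\text{dist}(0,\sigma(A))=\inf\{\nu\geq 0: -\nu\in\sigma(A)\}$, and my goal is to show that each of (a)--(d) is equivalent to $g=0$.

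Next I would translate every quantity appearing in the statement into an integral against $\mu_\phi$ (with $\phi=\phi(0)$ in the last):
\begin{equation}
\|A\phi\|^2 = \int_0^\infty \nu^2 \, d\mu_\phi , \spc \varsigma = \int_0^\infty \nu \, d\mu_\phi , \spc 2E^0 = \int_0^\infty d\mu_\phi , \spc \|\phi(t)\|^2 = \int_0^\infty e^{-2\nu t}\, d\mu_\phi .
\end{equation}
From these, the relevant extrema become weighted averages over $\nu$: one finds $\inf_\phi \|A\phi\|/\|\phi\| = g$ (statement (b)), $\inf_\phi \varsigma/E^0 = 2g$ (statement (c)), and $\sup_\phi \|\phi(t)\|/\|\phi(0)\| = \|e^{At}\| = e^{-gt}$ (statement (a)), the last being the operator norm of a self-adjoint contraction semigroup, $\|e^{At}\|=e^{t\sup\sigma(A)}=e^{-gt}$. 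Each of these equals its extremal value ($0$, $0$, and $1$ respectively) if and only if $g=0$; and since $0$ is not an eigenvalue, $g=0$ is equivalent to $0\in\sigma(A)$, which in turn is equivalent to $0$ being an accumulation point of $\sigma(A)$, i.e. statement (d). This closes the chain of equivalences.

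The easy half of each identity is the bound $\nu\geq g$ on $\text{supp}\,\mu_\phi$, which immediately gives $\|A\phi\|\geq g\|\phi\|$, $\varsigma\geq 2gE^0$, and $\|\phi(t)\|\leq e^{-gt}\|\phi(0)\|$. The substantive step, and the main obstacle, is the reverse direction: exhibiting, for every $\epsilon>0$, admissible test states whose spectral measure concentrates in $[g,g+\epsilon)$, so that all the quotients approach their extremal values simultaneously. This is a Weyl-sequence (approximate-eigenvector) construction: by definition of $g$ the spectral projection $E([g,g+\epsilon))$ is nonzero, so any unit vector in its range does the job, and because we have already restricted $A$ to $\mathcal{H}_{\text{non-hy}}$ these vectors automatically lie in the non-hydrodynamic sector. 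A secondary technical point to handle with care is that $A$ may be \emph{unbounded} (e.g. for $m=0$, where $(p^0)^{-1}$ is unbounded as $p^0\to 0$); this forces the $\|A\phi\|$-based quotient in (b) to be taken over $\text{Dom}(A)$, but the spectral calculus is insensitive to this, and statements (a), (c), (d) involve only bounded functions of $A$, so no difficulty arises there.
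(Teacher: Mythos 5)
Your proof is correct, and it rests on the same two pillars as the paper's own proof: self-adjointness of $A=(p^0)^{-1}L$ restricted to $\mathcal{H}_{\text{non-hy}}$, and the fact that $0$ is not an eigenvalue of $A$ there (so an isolated spectral point at $0$ is impossible). The organization, however, is genuinely different. The paper proves a cyclic chain (b)$\rightarrow$(c)$\rightarrow$(d)$\rightarrow$(b) plus the equivalence (a)$\leftrightarrow$(d), each step resting on a separate standard fact: Cauchy--Schwarz for (b)$\rightarrow$(c); the identification of the supremum of the Rayleigh quotient with $\sup \mathrm{Sp}(A)$, together with the classification of spectral points (empty residual spectrum, isolated points are eigenvalues), for (c)$\rightarrow$(d); the approximate-eigenvalue characterization for (d)$\rightarrow$(b); and the spectral mapping theorem applied to $e^{At}$ for (a)$\leftrightarrow$(d). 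You instead collapse all four statements onto the single quantity $g=\mathrm{dist}\bigl(0,\sigma(A)\bigr)$ and compute every extremum exactly via the projection-valued measure: $\inf \|A\phi\|/\|\phi\|=g$, $\inf \varsigma/E^0=2g$ (you got the factor of $2$ from $2E^0=\|\phi\|^2$ right), and $\sup \|\phi(t)\|/\|\phi(0)\|=e^{-gt}$, with approximate optimizers supplied by unit vectors in the range of $E\bigl([g,g+\epsilon)\bigr)$, which handle (a), (b), (c) simultaneously through one concentration argument. Your route buys more than the theorem asks: quantitative formulas tying the decay rate and entropy-production ratio to the spectral gap, of which the four equivalences are the special case $g=0$; it is also self-contained, deriving from the spectral calculus what the paper cites as separate textbook results. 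What the paper's route buys is modularity and minimal machinery per step — its (b)$\rightarrow$(c) step is bare Cauchy--Schwarz with no spectral calculus at all — and each implication is a citation-sized move. Your handling of the two technical points is also sound: the Weyl vectors lie in $\mathrm{Dom}(A)\cap\mathcal{H}_{\text{non-hy}}$ because spectral projections onto bounded sets map into the domain, and the unboundedness of $A$ (relevant, e.g., for $m=0$) affects only the quotient in (b), exactly as you say.
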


Statements (a,b,c,d) are four alternative (necessary and sufficient) criteria for $\tau_{\text{non-hy}}$ to be infinite. If any one of them is met, all four are met. Criterion (d) is the ``cut'' option considered in \cite{Moore:2018mma}. The implication (d)$\rightarrow$(a) is a rigorous formulation of the ``dehydrodynamization'' mechanism discussed in \cite{Kurkela:2017xis}. Note that, since \eqref{frittata} is linear, the states in \eqref{primocriterio} may be rescaled by an arbitrary constant factor. Thus, if (a) holds, there are long-lived non-hydrodynamic perturbations whose amplitude is (and will forever remain) much larger than an arbitrary preassigned soundwave.

The real novelty of Theorem \ref{theo1} are criteria (b,c). In fact, the recent literature tries to assess (d) by direct evaluation of the spectrum of $(p^0)^{-1}L$ \cite{Moore:2018mma,RochaCut2024cge}. This task requires numerical techniques that are by nature approximate, and sensitive to the details of $L$. Criteria (b) and (c) remarkably simplify the problem, since now one only needs to engineer a sequence $\{\phi_p^{(n)} \}_{n\in \mathbb{N}} \subset \mathcal{H}$ with unit norm, vanishing conserved densities, $(1,\phi_p^{(n)}){=}(p^\nu,\phi_p^{(n)}){=}0$, and such that either $ || (p^0)^{-1}L \phi_p^{(n)} ||$ or $\varsigma[\phi_p^{(n)}]$ converges to zero as $n{\rightarrow} \infty$. Below, we provide an explicit example.

\noindent \textit{Application: Massless scalar particles -} Let us apply Theorem \ref{theo1} to a system of massless classical particles with $\lambda \varphi^4$ interaction (working in the equilibrium rest frame, so $\beta^\mu = \beta \delta^\mu_0$). In this model, the spectrum of Boltzmann's collision operator $L$ is known analytically, and $\textup{Ker}(L)= \mathcal{H}_{\textup{hy}}$ \cite{Denicol:2022bsq}. We consider the sequence
\begin{equation}\label{phipnphi4}
    \phi_p^{(n)} =L_n^{(5)}(\beta p^0) p^{2} p^{3} \,  ,
\end{equation}
where $L^{(2\ell+1)}_n$ are Laguerre polynomials. These states are eigenvectors of $L$, with eigenvalues
\begin{equation}
\chi_{n2} \propto -  \dfrac{n{+}1}{n{+}3} \, . 
\end{equation}
Showing that $(1,\phi_p^{(n)})=(p^\nu,\phi_p^{(n)})=0$ is straightforward. Furthermore, in the Supplementary Material we verify that
\begin{equation}
    \dfrac{|| (p^0)^{-1} L\phi_p^{(n)} ||}{|| \phi_p^{(n)} ||} \leq  \dfrac{\text{const}}{\sqrt{n{+}3}}  \xrightarrow[]{n\rightarrow +\infty} 0.
\end{equation}
Thus, criterion (b) is met. Applying Theorem \ref{theo1}, we conclude from (a) that the gas possesses non-hydrodynamic excitations that survive longer than soundwaves. We also conclude from (d) that the non-hydrodynamic sector is gapless, in agreement with equation \eqref{poWeer}, and with numerical studies \cite{Moore:2018mma,RochaCut2024cge}.

\noindent \textit{High-Energy Criterion -} For Boltzmann's equation, $\textup{Ker}(L)$ always equals $\mathcal{H}_{\textup{hy}}$, and the entropy production is \cite{Groot1980RelativisticKT}
\begin{equation}\label{varsigma}
    \varsigma = \dfrac{1}{8} 
\int \dfrac{d^3p}{(2\pi)^3 p^0} \dfrac{d^3p'}{(2\pi)^3p'^0} \dfrac{d^3q}{(2\pi)^3q^0} \dfrac{d^3q'}{(2\pi)^3q'^0} \, s \upvarsigma(s,\theta) \, \delta^4 (p{+}p'{-}q{-}q') \, f^{\text{eq}}_{p} f^{\text{eq}}_{p'} |\phi_p+\phi_{p'}-\phi_q-\phi_{q'}|^2  \, ,
\end{equation}
where $\upvarsigma(s,\theta)$ is the differential cross-section, $s =E^2_{\text{CM}}$ is the Mandelstam variable, and $\theta$ is the scattering angle in the center of momentum frame. Now, suppose that there are two constants $A\geq 0$ and $0<a \leq 1$ such that $\upvarsigma(s,\theta) \leq A/s^a$. Then, if we consider the sequence of perturbations
\begin{equation}\label{sequenceiaziomain}
    \phi_p^{(n)}= e^{\frac{\beta}{2}(1-n^{-1})p^0} \dfrac{2p^2 p^3}{(p^2)^2{+}(p^3)^2} \, ,
\end{equation}
which are orthogonal to $1$ and $p^\nu$, we find that (see Supplementary Material)
\begin{equation}
    \dfrac{\varsigma[\phi_p^{(n)}]}{E^0[\phi_p^{(n)}]} \leq \dfrac{\text{const}}{n^a} \xrightarrow[]{n\rightarrow +\infty} 0 \, .
\end{equation}
Hence, condition (c) of Theorem \ref{theo1} is fulfilled. Considering that $\varsigma$ is linear in $\upvarsigma$, we 
have the following rigorous result.
\begin{theorem}[\textbf{High-energy criterion}]\label{theo2}
Consider a classical gas of relativistic particles, governed by Boltzmann's equation, with non-vanishing cross-section. Suppose that there are three constants $A\geq 0$, $B \geq 0$,  and $0<a< 1$ such that
\begin{equation}\label{surprising}
\upvarsigma(s,\theta) \leq \dfrac{A}{s^a} + \dfrac{B}{s} \, ,
\end{equation}
for all $\theta$ and $s$. Then, the assumptions of Theorem \ref{theo1} are met, and facts \textup{(a,b,c,d)} occur.
\end{theorem}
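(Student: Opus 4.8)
The plan is to reduce Theorem \ref{theo2} to an application of Theorem \ref{theo1}, which demands two things: that the hypothesis $\textup{Ker}(L)=\mathcal{H}_{\textup{hy}}$ holds, and that one of the four equivalent criteria is verified. I would secure the hypothesis via the classical $H$-theorem. Since $(p^0)^{-1}L$ is self-adjoint and negative semidefinite by \eqref{bazinga}, the vanishing $\varsigma[\phi]=0$ forces $(p^0)^{-1}L\phi=0$, hence $L\phi=0$; and from \eqref{varsigma} with non-vanishing $\upvarsigma$, $\varsigma[\phi]=0$ holds iff $\phi_p+\phi_{p'}-\phi_q-\phi_{q'}$ vanishes on every collision shell, i.e. iff $\phi$ is a summational invariant. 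As the only summational invariants lying in $\mathcal{H}$ are linear combinations of $1$ and $p^\nu$, we obtain $\textup{Ker}(L)=\mathcal{H}_{\textup{hy}}$. With the hypothesis in hand, it suffices to verify criterion (c) for the explicit sequence \eqref{sequenceiaziomain}, after which (a), (b) and (d) follow automatically.

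Next I would check that $\phi_p^{(n)}\in\mathcal{H}_{\textup{non-hy}}$. Each member has finite norm, because the factor surviving in $f_p^{\text{eq}}|\phi_p^{(n)}|^2$ is $e^{\alpha}e^{-\beta p^0/n}$, which decays for every finite $n$, while the angular amplitude $2p^2p^3/[(p^2)^2+(p^3)^2]$ depends only on direction and is bounded by $1$. Orthogonality to the hydrodynamic modes is a parity statement: in the rest frame $f_p^{\text{eq}}$, $1$, $p^0$ and $p^1$ are all even under $p^2\to-p^2$, whereas $\phi_p^{(n)}$ is odd; and $p^2,p^3$ are each flipped by the reflection ($p^3\to-p^3$, resp. $p^2\to-p^2$) that leaves $f_p^{\text{eq}}$ invariant but changes the sign of $\phi_p^{(n)}$. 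Hence $(1,\phi_p^{(n)})=(p^\nu,\phi_p^{(n)})=0$.

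The heart of the proof is the estimate $\varsigma[\phi_p^{(n)}]/E^0[\phi_p^{(n)}]\le \text{const}/n^a$. For the denominator, the substitution $p^0=nu$ gives $E^0[\phi_p^{(n)}]=\tfrac12\int(2\pi)^{-3}f_p^{\text{eq}}|\phi_p^{(n)}|^2\,d^3p\sim n^3$. For the numerator I would bound $|\phi_p+\phi_{p'}-\phi_q-\phi_{q'}|^2\le 4(|\phi_p|^2+|\phi_{p'}|^2+|\phi_q|^2+|\phi_{q'}|^2)$ and exploit the permutation symmetries of \eqref{varsigma} together with on-shell detailed balance $f_p^{\text{eq}}f_{p'}^{\text{eq}}=f_q^{\text{eq}}f_{q'}^{\text{eq}}$ to collapse the four terms into one, yielding $\varsigma\lesssim\int(2\pi)^{-3}(p^0)^{-1}f_p^{\text{eq}}|\phi_p|^2\,R(p)\,d^3p$, where $R(p)$ is the loss rate obtained by integrating out the partner $p'$ and the outgoing pair $q,q'$. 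Carrying out the $q,q'$ integral produces the total cross-section, so $R(p)\sim\int(2\pi)^{-3}(p'^0)^{-1}f_{p'}^{\text{eq}}\,s\,\sigma_T(s)\,d^3p'$ up to bounded phase-space factors. Using $\sigma_T(s)\lesssim A s^{-a}+B s^{-1}$ together with the elementary bound $s\lesssim p^0 p'^0$ and the convergence of the thermal moment $\int(2\pi)^{-3}(p'^0)^{-a}f_{p'}^{\text{eq}}(p'^0)^{2}\,d^3p'$ (in the exponent-only sense), one gets $R(p)\lesssim C\,(1+(p^0)^{1-a})$. Inserting this and rescaling $p^0=nu$ gives $\varsigma[\phi_p^{(n)}]\lesssim n^{3-a}$, so that $\varsigma/E^0\lesssim n^{-a}\to 0$ and criterion (c) is established.

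The main obstacle is the uniform-in-$n$ control of the loss rate $R(p)$. The delicate point is that $R(p)$ itself \emph{grows} with energy, $R(p)\sim (p^0)^{1-a}$; what rescues the argument is the invariant-measure factor $(p^0)^{-1}$, which turns $R(p)$ into a genuine relaxation rate $R(p)/p^0\sim (p^0)^{-a}$ that decays — the precise mathematical incarnation of the physical fact that hard particles collide rarely when the potential is soft. One must also confirm that the near-threshold region, governed by the $B s^{-1}$ piece, contributes only a subleading power ($\sim n^2$ in the massless case), so that the hypothesis $a<1$ ensures the $A s^{-a}$ term dominates and both contributions to $\varsigma/E^0$ vanish; by contrast $a=0$ (hard spheres) would leave $R(p)/p^0\sim\text{const}$ and $\varsigma/E^0$ bounded below, which is why strict softness $a>0$ is essential. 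The remaining labor — tracking the exact flux and two-body phase-space factors and invoking the bound $\sqrt{1-4m^2/s}\le 1$ for the massive threshold — is routine once this scaling has been isolated.
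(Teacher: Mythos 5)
Your proposal is correct and follows essentially the same route as the paper's own proof: verify criterion (c) of Theorem \ref{theo1} on the sequence \eqref{sequenceiaziomain}, using the AM--QM bound on $|\phi_p{+}\phi_{p'}{-}\phi_q{-}\phi_{q'}|^2$, the permutation/detailed-balance symmetrization to collapse the four terms into one, the phase-space bound $\sqrt{1-4m^2/s}\leq 1$ for the $(q,q')$ integral, the estimate $s\leq 4p^0p'^0$ with monotonicity of $s^{1-a}$, and the scalings $\varsigma_n\lesssim n^{3-a}$ versus $E^0_n\gtrsim n^3$. Your extra remarks (the explicit $\textup{Ker}(L)=\mathcal{H}_{\textup{hy}}$ check via summational invariants, which the paper cites as a known fact, and the parity argument for orthogonality) are consistent with the paper and fill in steps it leaves implicit.
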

This means that, if $\upvarsigma$ decays for $s\rightarrow +\infty$ like a power law $s^{-a}$ (or faster), we can construct non-hydrodynamic excitations that survive as long as we want. The second term in \eqref{surprising} is there to remind us that $\upvarsigma(s,\theta)$ is allowed to diverge at $s= 0$ (which occurs only for massless particles), but it should not grow faster than $s^{-1}$ (for our proof to work). Note that Theorem \ref{theo2} only provides sufficient conditions for (a,b,c,d) to hold. In the Supplementary Material, we generalize Theorem \ref{theo2} to all soft interactions, following the definition of \cite{Dudynski1988}.

\noindent \textit{Application: Screened gauge theories -} Most scattering differential cross-sections in QED \cite{Minten:1969} and QCD \cite{Arnold2003} decay like $1/s$ at high energies, see e.g. the (ultrarelativistic) electron-electron and the gluon-gluon scatterings:
\begin{equation}\label{gaugiuzzo}
\upvarsigma_{ee \rightarrow ee} \propto  \dfrac{(3{+}\cos^2 \! \theta)^2}{s \, \sin^4 \! \theta} \, , \spc
\upvarsigma_{gg \rightarrow gg} \propto  \dfrac{1}{s} \bigg[ 3-\dfrac{\sin^2 \! \theta}{4} +  \dfrac{4{+}12\cos^2 \! \theta}{\sin^4 
 \! \theta} \bigg] \, . 
\end{equation}
The factor $(\sin \theta)^{-4}$ makes the operator $L$ small-angle divergent, reflecting the long-range nature of the interaction, which makes Boltzmann's assumption of local collisions invalid. However, if we correct the cross-sections \eqref{gaugiuzzo} with medium-dependent effects (e.g., Debye screening \cite{bellan_2006}) the small-angle behavior is regularized \cite{messiah1961quantum,Arnold2003}. With certain regularizations, e.g. $(\sin  \theta)^{-4} \rightarrow (\epsilon^2{+}\sin^2 \! \theta)^{-2}$ (with $\epsilon=\text{const}$), Theorem \ref{theo2} applies.

\newpage

\noindent \textit{Application: Yukawa theory -} Ultrarelativistic fermions $f$ interacting through either Yukawa coupling $ \varphi \Bar{f}f$ or $\varphi \Bar{f}\gamma^5 f$ have scattering cross-section $\upvarsigma_{ff \rightarrow ff} \sim 1/s$ \cite{Srednicki:2007qs,Al-Bakri:2023kcr} (the dependence on the angle disappears at high energies).
Theorem \ref{theo2} applies.

\noindent \textit{Final Remarks -} Interactions that decay to zero at high energy are the most common in relativistic physics. Nearly all ``textbook'' (two-body) cross-sections decay like $s^{-1}$, as one would expect from naive dimensional analysis \cite{Peskin_book}. Indeed, $\upvarsigma \propto s^{-1}$ is the universal scaling law of those theories that ``forget'' their reference mass scales in the ultrarelativistic limit. In those theories that ``remember'' some mass scale $M$ at high energy, unitarity requires that $\sigma \leq \pi M^{-2}\ln^2(s)$ \cite{Castaldi:1983hrk}, which allows for mild growth of $\upvarsigma$ with $s$ \cite{Kupsch:1982aa}. Hence, some diluted gases with a non-hydrodynamic gap are likely to exist. However, they seem to be the exception, not the rule.

Let us provide two arguments suggesting that the non-hydrodynamic sector of QCD plasmas is probably gapless.
\begin{itemize}
\item All relaxation-type models of QCD matter adopt an energy-dependent relaxation time $\tau \propto (p^0)^a$, with $0<a< 1$ \cite{Dusling:2009df,Kurkela:2017xis,Rocha:2021zcw}. If this approximation is valid, at least qualitatively, we can use it to learn about the scaling law of the cross sections. In fact, for a high-energy particle interacting with a thermal bath, $s \propto p^0$, and $\tau \propto \sigma^{-1}$. We conclude that $\sigma \propto 1/s^a$, consistently with \eqref{surprising}. Hence, in a corresponding ``full-fledged'' Boltzmann equation, Theorem \ref{theo2} applies, and the non-hydrodynamic sector is gapless. 
\item According to \cite{Baier:1996kr,Baier:1996sk}, particles with $p^0 \gg T$ traveling through a QCD plasma lose energy following the equation $\dot{p}^0=-2C\sqrt{p^0}$, whose solution is $p^0(t)=\big[\sqrt{p^0(0)}{-}Ct \big]^2\Theta[\sqrt{p^0(0)}{-}Ct \big]$. Suppose that there is a diluted population of high-energy particles whose number (per unit energy) falls like $(p^0)^{-3}$ at $t=0$. Then, at late times, the non-thermal energy carried by these particles decays slower than exponentially,
\begin{equation}
E_{\text{non-th}}(t) \propto \int_{(Ct)^2}^{\infty} \dfrac{(\sqrt{p^0}-Ct)^2}{(p^0)^3} dp^0 = \dfrac{1}{6(Ct)^2} \, ,
\end{equation}
proving the absence of a gap.
\end{itemize}

Although our analysis was formally carried out within the linear theory, the discussed phenomena are not artifacts of linearization. In fact, in the Supplementary Material, we show that, for each linear state considered here, there is a family of distribution functions that behave under the non-linear theory in a way that is indistinguishable from the linear theory prediction.

\section*{Acknowledgements}

This work was supported by a Vanderbilt's Seeding Success Grant. I would like to thank G. Rocha, G. Denicol, J. Noronha, M. Disconzi, J. Speck, J.F. Paquet, I. Danhoni, and K. Ingles for useful discussions.

\bibliography{Biblio}

\onecolumngrid
\newpage
\begin{center}
  \textbf{\large Gapless non-hydrodynamic modes in relativistic kinetic theory\\ Supplementary Material}\\[.2cm]
  L. Gavassino\\[.1cm]
  {\itshape Department of Mathematics, Vanderbilt University, Nashville, TN, USA\\}
(Dated: \today)\\[1cm]
\end{center}

\setcounter{equation}{0}
\setcounter{figure}{0}
\setcounter{table}{0}
\setcounter{page}{1}
\renewcommand{\theequation}{S\arabic{equation}}
\renewcommand{\thefigure}{S\arabic{figure}}

\title{Gapless non-hydrodynamic modes in relativistic kinetic theory: Supplementary Material}
\author{L.~Gavassino}
\email{lorenzo.gavassino@vanderbilt.edu}
\affiliation{Department of Mathematics, Vanderbilt University, Nashville, TN 37211, USA}

\vspace{-1cm}
\section{Long-lived exponential tails}
\vspace{-0.2cm}

\subsection{Transient evolution of a $\lambda \varphi^4$ gas perturbed by a shear deformation}
\vspace{-0.2cm}

Consider a classical gas of massless particles. 
Suppose that such gas has been prepared in the following state:
\begin{equation}\label{exciuto}
f_p= e^{\alpha -\beta p^0} \bigg[ 1+\epsilon \beta \dfrac{p^2 p^3}{p^0}\bigg] +\mathcal{O}(\epsilon^2) \, ,
\end{equation}
where $\epsilon$ is a small parameter. This state describes the transient linear response of a gas that was externally perturbed by a shear rate $\sigma_{23} \, {\propto}  \, \delta (t)$ \cite[Eq.s (48) and (49)]{Denicol_Relaxation_2011}. The term $\mathcal{O}(\epsilon^2)$ is the non-linear part of the response, which is beyond the scope of this article.
The particle current and stress-energy tensor in this state are
\begin{equation}
J^\mu = \int \dfrac{d^3 p}{(2\pi)^3 p^0} p^\mu f_p = \dfrac{e^\alpha T^3}{\pi^2} 
\begin{bmatrix}
1 \\
0 \\
0 \\
0 \\
\end{bmatrix} +\mathcal{O}(\epsilon^2) \, , \quad \quad 
T^{\mu \nu} = \int \dfrac{d^3 p}{(2\pi)^3 p^0} p^\mu  p^\nu f_p = \dfrac{e^\alpha T^4}{\pi^2} 
\begin{bmatrix}
3 & 0 & 0 & 0 \\
0 & 1 & 0 & 0 \\
0 & 0 & 1 & 4\epsilon/5 \\
0 & 0 & 4\epsilon/5 & 1 \\
\end{bmatrix} +\mathcal{O}(\epsilon^2) \, .
\end{equation}
We see that, to first order, the variable $\epsilon$ appears only in the expression for $T^{23}$. Hence, the linear perturbation $e^{\alpha -\beta p^0}  \beta p^2 p^3/p^0$ corresponds, from a fluid-dynamic perspective, to an initial anisotropy of the stress tensor. Such anisotropy is a pure non-hydrodynamic mode, since it carries no net conserved density (i.e. $J^0$ and $T^{0 \nu}$ do not depend on $\epsilon$). For this reason, when we evolve this state using the linear theory, the function $T^{23}(t)$ will decay to zero, and the gas will approach equilibrium. 

Let us determine the decay law of $T^{23}$. To this end, we first evolve the distribution function $f_p$ in the relaxation time approximation, with an energy-dependent relaxation time. Again, we ``mimic'' the evolution of a self-interacting scalar field theory by taking $\tau_p=\beta p^0 \tau$ (where $\tau$ is a constant) \cite{Denicol:2022bsq}. The state at time $t\geq 0$ is, therefore,
\begin{equation}\label{exciutot}
f_p(t)= e^{\alpha -\beta p^0} \bigg[ 1+\epsilon \beta \dfrac{p^2 p^3}{p^0} e^{-t/(\beta p^0 \tau)}\bigg] +\mathcal{O}(\epsilon^2) \, ,
\end{equation}
The integral defining the function $T^{23}(t)$ can be solved analytically. The result is (in the limit of small $\epsilon$)
\begin{equation}
 \lim_{\epsilon \rightarrow 0} \, \,  \dfrac{T^{23}(t)}{T^{23}(0)}= \dfrac{(t/\tau)^{5/2}}{12} K_5 \big(2\sqrt{t/\tau}\big) \, ,
\end{equation}
where $K_5$ is the modified Bessel function of the second kind. At late time, the anisotropy approaches the following asymptotic behavior:
\begin{equation}\label{sblurfono}
\lim_{\epsilon \rightarrow 0}\,  \, \dfrac{T^{23}(t)}{T^{23}(0)}\approx \dfrac{\sqrt{\pi}}{24} (t/\tau)^{9/4} e^{-2\sqrt{t/\tau}} \, .    
\end{equation}
As can be seen, the decay law of the shear stress is subexponential, meaning that the non-hydrodynamic excitation $e^{\alpha -\beta p^0}  \beta p^2 p^3/p^0$ lives longer than any hydrodynamic wave (see also \cite{Kurkela:2017xis}). 

The above example illustrates two important facts:
\begin{itemize}
\item[1.] We do not need to inject a lot of energy into the system to generate a long-lived mode. We only need to construct excitations with unbounded support in the variable $p$. This happens naturally when the perturbing source is a hydrodynamic gradient, as in the example above. Note that subexponential decay laws like \eqref{sblurfono} have been observed also in stochastic correlation functions with relaxation time $\tau_p=\beta p^0 \tau$ \cite{RochaGavassinoFluctu:2024afv}.
\item[2.] Since the state \eqref{exciuto} describes the response of the gas to shear deformation, the decay law \eqref{sblurfono} should be interpreted as the transient evolution of the shear stress component $\pi^{23}$ induced by an instantaneous change in shear rate $\sigma_{23}$. Clearly, this does not coincide with the Israel-Stewart transient behavior (where $\pi^{23}{\sim} e^{-t/\tau_\pi}$ \cite[\S 6.5.1]{rezzolla_book}). Therefore, ``second-order hydrodynamics'' does not accurately depict how the shear stress relaxes to its Navier-Stokes value. This forces one to revise the physical interpretation of $\tau_\pi$ as the slowest relaxation time \cite{WagnerGavassino2023jgq}. A possible way around this problem might be to introduce a cutoff in $p^0$, as was done in \cite{Gavassino2024InfiniteOrder} (and suggested in \cite{RochaCut2024cge}).
\end{itemize}

\subsection{A bounded choice of $\phi$}

In the previous example, we considered a physically-motivated transient perturbation of the shear stress tensor, arising from an instantaneous shear deformation. This had the inconvenience of being applicable only for $\epsilon$ infinitesimal. In fact, at finite $\epsilon$, the magnitude of $\epsilon p^2 p^3/p^0$ becomes arbitrarily large at high energy, and may eventually go below $-1$. This means that the distribution function can become negative in some regions of phase space, if we do not include a balancing term of order $\epsilon^2$. Let us now give an example of a state where this does not happen, and where the non-equilibrium perturbation is much smaller than the equilibrium part for all momenta, also at finite $\epsilon$.
Specifically, we choose
\begin{equation}\label{exciutone}
f_p= e^{\alpha -\beta p^0} \bigg[ 1+\epsilon  \dfrac{p^2 p^3}{(p^0)^2}\bigg] \, ,
\end{equation}
which is very close to equilibrium for all $(p^1,p^2,p^3)$, provided that $\epsilon \ll 1$.
This disturbance is, for our purposes, completely analogous to \eqref{exciuto}, since it only perturbs the component $T^{23}$ of the stress-energy tensor. In the relaxation-type approximation, with $\tau_p=\beta p^0 \tau$, the distribution function evolves as follows:
\begin{equation}\label{exciutot}
f_p(t)= e^{\alpha -\beta p^0} \bigg[ 1+\epsilon  \dfrac{p^2 p^3}{(p^0)^2} e^{-t/(\beta p^0 \tau)}\bigg] \, .
\end{equation}
The resulting shear stress exhibits the following (exact) time dependence:
\begin{equation}
\dfrac{T^{23}(t)}{T^{23}(0)} =\dfrac{(t/\tau)^2}{3} K_4 (2\sqrt{t/\tau}) \, .
\end{equation}
This, again, decays slower than exponentially at late times, since we have that, asymptotically,
\begin{equation}
\dfrac{T^{23}(t)}{T^{23}(0)} \approx \dfrac{\sqrt{\pi}}{6} (t/\tau)^{7/4} e^{-2\sqrt{t/\tau}} \, .
\end{equation}\\

\newpage

\section{Properties of the linearised collision operator}\label{AppHwrm}

Here, we verify that Boltzmann's collision integral fulfills all the general properties mentioned in the main text. 

The action of $L$ on a generic function $\phi_p$ is
\begin{equation}\label{gargarensis}
    L\phi_p = \int \dfrac{d^3p'}{(2\pi)^3p'^0} \dfrac{d^3q}{(2\pi)^3q^0} \dfrac{d^3q'}{(2\pi)^3q'^0} W_{pp'\leftrightarrow qq'} f^{\text{eq}}_{p'}(\phi_q{+}\phi_{q'}{-}\phi_p{-}\phi_{p'}) \, ,
\end{equation}
where $W_{pp'\leftrightarrow qq'}=W_{p'p\leftrightarrow qq'}=W_{qq'\leftrightarrow pp'} \geq 0$ is the transition rate, which vanishes unless $p{+}p'=q{+}q'$, due to energy-momentum conservation. This implies that $W_{pp'\leftrightarrow qq'} \neq 0$ only if $f^{\text{eq}}_{p} f^{\text{eq}}_{p'}=f^{\text{eq}}_{q} f^{\text{eq}}_{q'}$. 

The fact that $L1=Lp^\nu=0$ is evident. Thus, we only need to prove the properties ``Dissipation'' and ``Onsager Symmetry''. To do that, we multiply \eqref{gargarensis} by $f^{\text{eq}}_p \psi^*_p/p^0$, and integrate over all momenta:
\begin{equation}\label{origEz}
 \int \dfrac{d^3p}{(2\pi)^3 p^0} f^{\text{eq}}_{p} \psi_p^* L\phi_p=-
\int \dfrac{d^3p}{(2\pi)^3 p^0} \dfrac{d^3p'}{(2\pi)^3p'^0} \dfrac{d^3q}{(2\pi)^3q^0} \dfrac{d^3q'}{(2\pi)^3q'^0} W_{pp'\leftrightarrow qq'} f^{\text{eq}}_{p} f^{\text{eq}}_{p'} \psi_p^*(\phi_p{+}\phi_{p'}{-}\phi_q{-}\phi_{q'})
\end{equation}
Since the integration variables $(pp'qq')$ are dummy variables, we can rename them at will. If we perform the change of variables $(pp'qq') \rightarrow (p'pqq')$ and use the symmetry $W_{p'p\leftrightarrow qq'}=W_{pp'\leftrightarrow qq'}$, we obtain
\begin{equation}\label{origEz2}
     \int \dfrac{d^3p}{(2\pi)^3 p^0} f^{\text{eq}}_{p} \psi_p^* L\phi_p=-
\int \dfrac{d^3p}{(2\pi)^3 p^0} \dfrac{d^3p'}{(2\pi)^3p'^0} \dfrac{d^3q}{(2\pi)^3q^0} \dfrac{d^3q'}{(2\pi)^3q'^0} W_{pp'\leftrightarrow qq'} f^{\text{eq}}_{p} f^{\text{eq}}_{p'} \psi_{p'}^*(\phi_p{+}\phi_{p'}{-}\phi_q{-}\phi_{q'}) \, .
\end{equation}
If, instead, we rename the variables in \eqref{origEz} as follows: $(pp'qq')\rightarrow (qq'pp')$, we invoke the symmetry $W_{pp'\leftrightarrow qq'}=W_{qq'\leftrightarrow pp'}$ and the condition that $f^{\text{eq}}_{p} f^{\text{eq}}_{p'}= f^{\text{eq}}_{q} f^{\text{eq}}_{q'}$ whenever the transition is allowed, we obtain
\begin{equation}\label{origEz3}
 \int \dfrac{d^3p}{(2\pi)^3 p^0} f^{\text{eq}}_{p} \psi_p^* L\phi_p=-
\int \dfrac{d^3p}{(2\pi)^3 p^0} \dfrac{d^3p'}{(2\pi)^3p'^0} \dfrac{d^3q}{(2\pi)^3q^0} \dfrac{d^3q'}{(2\pi)^3q'^0} W_{pp'\leftrightarrow qq'} f^{\text{eq}}_{p} f^{\text{eq}}_{p'} (-\psi_q^*)(\phi_p{+}\phi_{p'}{-}\phi_q{-}\phi_{q'})\, .
\end{equation}
Finally, let's perform the change of variables $(pp'qq') \rightarrow (pp'q'q)$ in \eqref{origEz3}, and use the symmetry $W_{pp'\leftrightarrow q'q}=W_{pp'\leftrightarrow qq'}$. The result is 
\begin{equation}\label{origEz4}
 \int \dfrac{d^3p}{(2\pi)^3 p^0} f^{\text{eq}}_{p} \psi_p^* L\phi_p=-
\int \dfrac{d^3p}{(2\pi)^3 p^0} \dfrac{d^3p'}{(2\pi)^3p'^0} \dfrac{d^3q}{(2\pi)^3q^0} \dfrac{d^3q'}{(2\pi)^3q'^0} W_{pp'\leftrightarrow qq'} f^{\text{eq}}_{p} f^{\text{eq}}_{p'} (-\psi_{q'}^*)(\phi_p{+}\phi_{p'}{-}\phi_q{-}\phi_{q'})\, .
\end{equation}
Adding up \eqref{origEz}, \eqref{origEz2}, \eqref{origEz3}, and \eqref{origEz4}, we finally obtain
\begin{equation}
  \int \dfrac{d^3p}{(2\pi)^3 p^0} f^{\text{eq}}_{p} \psi_p^* L\phi_p=-\dfrac{1}{4} 
\int \dfrac{d^3p}{(2\pi)^3 p^0} \dfrac{d^3p'}{(2\pi)^3p'^0} \dfrac{d^3q}{(2\pi)^3q^0} \dfrac{d^3q'}{(2\pi)^3q'^0} W_{pp'\leftrightarrow qq'} f^{\text{eq}}_{p} f^{\text{eq}}_{p'} (\psi^*_p{+}\psi^*_{p'}{-}\psi^*_q{-}\psi^*_{q'})(\phi_p{+}\phi_{p'}{-}\phi_q{-}\phi_{q'}) \, .
\end{equation}
The properties of $L$ are now manifest. In fact, setting $\psi=\phi$, we obtain
\begin{equation}
  \int \dfrac{d^3p}{(2\pi)^3 p^0} f^{\text{eq}}_{p} \psi_p^* L\phi_p=-\dfrac{1}{4} 
\int \dfrac{d^3p}{(2\pi)^3 p^0} \dfrac{d^3p'}{(2\pi)^3p'^0} \dfrac{d^3q}{(2\pi)^3q^0} \dfrac{d^3q'}{(2\pi)^3q'^0} W_{pp'\leftrightarrow qq'} f^{\text{eq}}_{p} f^{\text{eq}}_{p'} |\phi_p{+}\phi_{p'}{-}\phi_q{-}\phi_{q'}|^2 \leq 0 \, ,
\end{equation}
thereby proving ``Dissipation''. To verify ``Onsager Symmetry'', we can write its two sides explicitly:
\begin{equation}
\begin{split}
 \bigg[ \int \dfrac{d^3p}{(2\pi)^3 p^0} f^{\text{eq}}_{p} \psi_p^* L\phi_p \bigg]^*={}& {-}\dfrac{1}{4} 
\int \dfrac{d^3p}{(2\pi)^3 p^0} \dfrac{d^3p'}{(2\pi)^3p'^0} \dfrac{d^3q}{(2\pi)^3q^0} \dfrac{d^3q'}{(2\pi)^3q'^0} W_{pp'\leftrightarrow qq'} f^{\text{eq}}_{p} f^{\text{eq}}_{p'} (\psi_p{+}\psi_{p'}{-}\psi_q{-}\psi_{q'})(\phi^*_p{+}\phi^*_{p'}{-}\phi^*_q{-}\phi^*_{q'}) \, , \\
  \int \dfrac{d^3p}{(2\pi)^3 p^0} f^{\text{eq}}_{p} \phi_p^* L\psi_p={}& {-}\dfrac{1}{4} 
\int \dfrac{d^3p}{(2\pi)^3 p^0} \dfrac{d^3p'}{(2\pi)^3p'^0} \dfrac{d^3q}{(2\pi)^3q^0} \dfrac{d^3q'}{(2\pi)^3q'^0} W_{pp'\leftrightarrow qq'} f^{\text{eq}}_{p} f^{\text{eq}}_{p'} (\phi^*_p{+}\phi^*_{p'}{-}\phi^*_q{-}\phi^*_{q'})(\psi_p{+}\psi_{p'}{-}\psi_q{-}\psi_{q'}) \, . \\
\end{split}
\end{equation}
Clearly, they coincide.

\newpage

\section{Proof of Theorem 1}

\begin{theorem}\label{theo1}
Suppose that $1$ and $p^\nu$ are the only collisional invariants belonging to $\mathcal{H}$, i.e. $\textup{Ker}(L)= \mathcal{H}_{\textup{hy}}$.  Then, the following facts are equivalent:
\begin{itemize}
\item[\textup{(a)}] We can construct non-hydrodynamic excitations that survive as long as we want, i.e. for any time $t>0$,
\begin{equation}\label{survival}
   \sup_{\phi_p \in \mathcal{H}_{\textup{non-hy}}}  \dfrac{|| \phi_p(t)||}{|| \phi_p(0) ||} =1 \, .
\end{equation}
\item[\textup{(b)}] We can construct non-hydrodynamic states $\phi_p$ such that $||(p^0)^{-1}L\phi_p ||$ is arbitrarily smaller than $|| \phi_p ||$, i.e.
\begin{equation}\label{superuzzuz}
    \inf_{\phi_p \in \mathcal{H}_{\textup{non-hy}}} \dfrac{|| (p^0)^{-1} L \phi_p ||}{||\phi_p||} =0 \, .
\end{equation} 
\item[\textup{(c)}] We can construct non-hydrodynamic states that produce negligibly small entropy, i.e.
\begin{equation}\label{entropona}
    \inf_{\phi_p \in \mathcal{H}_{\textup{non-hy}}} \dfrac{\varsigma}{E^0} =0 \, .
\end{equation}
\item[\textup{(d)}] Working within the space $\mathcal{H}_{\textup{non-hy}}$, the frequency $\omega=0$ is an accumulation point of the spectrum of $(p^0)^{-1}L$. Namely, for any $\epsilon>0$, there is a non-hydrodynamic frequency $\omega \neq 0$ such that $|\omega|<\epsilon$.
\end{itemize}
\end{theorem}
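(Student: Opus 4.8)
The plan is to reduce all four statements to a single nonnegative number—the distance from $0$ to the spectrum of the restricted generator—and to compute each quantity explicitly through the spectral theorem. Write $A:=(p^0)^{-1}L$ for the self-adjoint Friedrichs extension acting on the invariant subspace $\mathcal{H}_{\text{non-hy}}$. By \eqref{bazinga}, $A$ is Hermitian and negative-semidefinite, so $\sigma(A)\subseteq(-\infty,0]$ and $(\phi_p,A\phi_p)=-\varsigma\leq0$. Moreover, the hypothesis $\text{Ker}(L)=\mathcal{H}_{\text{hy}}$ forces $\varsigma>0$ for every nonzero $\phi_p\in\mathcal{H}_{\text{non-hy}}$; hence $A$ is negative-\emph{definite} there and $0$ is not an eigenvalue. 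Define the gap $g:=\mathrm{dist}(0,\sigma(A))=-\sup\sigma(A)\geq0$. With this notation (d) is, by definition, the statement $g=0$: since $0$ can never be an eigenvalue on $\mathcal{H}_{\text{non-hy}}$, it belongs to $\sigma(A)$ precisely when it is an accumulation point.

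First I would attach to each $\phi_p$ its scalar spectral measure $\mu_\phi(d\lambda)=d(E_\lambda\phi_p,\phi_p)$, supported on $\sigma(A)$, and rewrite the three remaining ratios as normalized moments,
\begin{equation}
\frac{||A\phi_p||^2}{||\phi_p||^2}=\frac{\int\lambda^2\,d\mu_\phi}{\int d\mu_\phi},\qquad \frac{\varsigma}{E^0}=\frac{2\int(-\lambda)\,d\mu_\phi}{\int d\mu_\phi},\qquad \frac{||\phi_p(t)||^2}{||\phi_p(0)||^2}=\frac{\int e^{2\lambda t}\,d\mu_\phi}{\int d\mu_\phi}.
\end{equation}
Because each integrand ($\lambda^2$, $-\lambda$, and $e^{2\lambda t}$ with $t>0$) is monotone in $|\lambda|$ on $(-\infty,0]$, the relevant extremum is reached by making $|\lambda|$ as small as the spectrum allows, i.e. by concentrating $\mu_\phi$ at the point of $\sigma(A)$ nearest the origin, $\lambda=-g$. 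Realizing this concentration rigorously is the one place requiring care: I would invoke the spectral projections $E([-g-\epsilon,-g+\epsilon])$, which are nonzero for every $\epsilon>0$ by definition of $g$, and use any unit vector in their range as an approximate extremizer. These vectors automatically lie in $\mathcal{H}_{\text{non-hy}}$, since $A$ has been restricted to this invariant subspace, so the approximants never leak into $\mathcal{H}_{\text{hy}}$.

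Carrying out the optimization yields the explicit identities
\begin{equation}
\inf_{\phi_p\in\mathcal{H}_{\text{non-hy}}}\frac{||A\phi_p||}{||\phi_p||}=g,\qquad \inf_{\phi_p\in\mathcal{H}_{\text{non-hy}}}\frac{\varsigma}{E^0}=2g,\qquad \sup_{\phi_p\in\mathcal{H}_{\text{non-hy}}}\frac{||\phi_p(t)||}{||\phi_p(0)||}=e^{-gt},
\end{equation}
from which (b), (c), and (a) each hold if and only if $g=0$; for (a) one uses that $e^{-gt}=1$ for one (equivalently every) $t>0$ iff $g=0$. Since (d) is also the statement $g=0$, all four criteria are equivalent. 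The hardest part is technical rather than conceptual: because $A$ is unbounded, the infimum in (b) must be taken over the dense domain $D(A)$ and the supremum in (a) over the contraction semigroup $e^{At}$, and one must confirm that Weyl-type approximate eigenvectors for $\lambda=-g$ genuinely exist in $\mathcal{H}_{\text{non-hy}}$. All of this is standard spectral theory—indeed the equivalence (b)$\Leftrightarrow$(d) is exactly Weyl's criterion for $0\in\sigma(A)$—so no estimate beyond the bookkeeping already recorded in \eqref{bazinga} is needed.
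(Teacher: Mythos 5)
Your proof is correct, and it rests on the same spectral-theoretic pillars as the paper's proof: self-adjointness of $(p^0)^{-1}L$ restricted to $\mathcal{H}_{\text{non-hy}}$, exclusion of $0$ from the point spectrum via the kernel hypothesis, closedness of the spectrum, the fact that isolated spectral points of a self-adjoint operator are eigenvalues, Weyl-type approximate eigenvectors, and spectral mapping for $e^{At}$. What you do differently is organizational, but genuinely so: the paper proves a cycle of qualitative implications --- (b)$\rightarrow$(c) by Cauchy--Schwarz, (c)$\rightarrow$(d) by the Rayleigh-quotient characterization of the spectral supremum plus the point/continuous/residual trichotomy, (d)$\rightarrow$(b) by Weyl's criterion, and (a)$\leftrightarrow$(d) by the spectral mapping theorem --- whereas you collapse all four statements onto the single invariant $g=-\sup \text{Sp}\big[(p^0)^{-1}L\big]$ and derive closed-form identities, $\inf ||A\phi_p||/||\phi_p||=g$, $\inf \varsigma/E^0=2g$, $\sup ||\phi_p(t)||/||\phi_p(0)||=e^{-gt}$, using spectral measures and the projections $E([-g-\epsilon,-g+\epsilon])$ as approximate extremizers. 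Your route buys more than the theorem asks: the four criteria become monotone functions of one gap, so you get quantitative versions of every equivalence (exact decay rate $e^{-gt}$, entropy-production lower bound $2g$), which the paper's chain of implications does not exhibit. The paper's route is more modular --- each arrow is a single citation to a standard theorem, and its (b)$\rightarrow$(c) step needs only Cauchy--Schwarz rather than the full spectral calculus. Two small points you should tighten: your claim that the kernel hypothesis ``forces $\varsigma>0$'' itself requires the semi-definiteness argument ($\varsigma=0$ with $A\leq 0$ implies $A\phi_p=0$, hence $L\phi_p=0$ since multiplication by $(p^0)^{-1}$ is injective, hence $\phi_p\in\mathcal{H}_{\text{hy}}\cap\mathcal{H}_{\text{non-hy}}=\{0\}$); and your assertion that $0$ belongs to $\text{Sp}(A)$ ``precisely when it is an accumulation point'' silently invokes the isolated-points-are-eigenvalues fact, which deserves an explicit statement (it is the same Kato reference the paper leans on).
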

\begin{proof}
Since statements (a,b,c,d) are properties of the restriction of $(p^0)^{-1}L$ to $\mathcal{H}_{\text{non-hy}}$, we will work inside the Hilbert subspace $\mathcal{H}_{\text{non-hy}}$ across the whole proof.  In the following, we will use only two ingredients. First, that $(p^0)^{-1}L$ is a self-adjoint operator on $\mathcal{H}_{\text{non-hy}}$. Second, that, since $\textup{Ker}(L)\cap \mathcal{H}_{\text{non-hy}}=\{ 0 \}$, and $\textup{Ker}(1/p^0)=\{ 0\}$, the number $0$ is not a proper eigenvalue of $(p^0)^{-1}L$ inside $\mathcal{H}_{\text{non-hy}}$.
This said, let us first prove the chain (b)$\rightarrow$(c)$\rightarrow$(d)$\rightarrow$(b). 

\noindent (b)$\rightarrow$(c): We invoke the Cauchy-Schwartz inequality:
\begin{equation}
 \dfrac{\varsigma}{2E^0} = \dfrac{\bigg| \bigg(\phi_p , \dfrac{1}{p^0}L \phi_p \bigg) \bigg|}{(\phi_p,\phi_p)} \leq  \dfrac{|| \phi_p || \, \bigg| \bigg|  \dfrac{1}{p^0}L \phi_p  \bigg| \bigg|}{|| \phi_p ||^2} = \dfrac{ \bigg| \bigg|  \dfrac{1}{p^0}L \phi_p  \bigg| \bigg|}{|| \phi_p ||} \, .
\end{equation}
Therefore, if there is a sequence of states along which $||(p^0)^{-1}L\phi_p ||/||\phi_p||$ converges to zero, also $\varsigma/E^0$ will tend to zero along such a sequence. Hence, \eqref{superuzzuz} implies \eqref{entropona}.

\noindent (c)$\rightarrow$(d): Equation \eqref{entropona} is equivalent to
\begin{equation}
\sup_{\phi_p \in \mathcal{H}_{\text{non-hy}}}  \dfrac{\bigg(\phi_p , \dfrac{1}{p^0}L \phi_p \bigg) }{(\phi_p,\phi_p)} =0 \, . 
\end{equation}
Since $(p^0)^{-1}L$ is self-adjoint, we invoke Theorem 2.19 of \cite[Chapter 2, \S 2.4]{Teschlbook}, and we conclude that $\sup \text{Sp}[(p^0)^{-1}L]{=}0$, where ``Sp'' denotes the spectrum (within $\mathcal{H}_{\text{non-hy}}$). However, according to Proposition 5.2.13 of \cite[Chapter 5, \S 5.2]{Pedersenbook}, the spectrum of a self-adjoint operator is non-empty and closed. Thus, $0$ belongs to the spectrum, by Theorem 2.28 of \cite[Chapter 2]{RudinAnalysis_book}. However, we know that $0$ is not a proper eigenvalue in $\mathcal{H}_{\text{non-hy}}$, meaning that it does not belong to the point spectrum. Furthermore, $0$ cannot belong to the residual spectrum,  which is empty, by the spectral theorem, see Theorem 1 of \cite[Lecture 18]{Bhatiabook}. Therefore, it must belong to the continuous spectrum, by Point 6 of \cite[Lecture 17]{Bhatiabook}. On the other hand, all isolated points in the spectrum of a self-adjoint operator belong to the point spectrum \cite[Chapter 5, \S 3.5]{Kato_book}. Therefore, since $0$ does not belong to the point spectrum (within $\mathcal{H}_{\text{non-hy}}$), it cannot be isolated, and it must be an accumulation point of the spectrum, proving (d).

\noindent (d)$\rightarrow$(b): Suppose that $0$ is an accumulation point of the spectrum. Then, $0$ belongs to the spectrum, since the spectrum of a self-adjoint operator is closed (again, by Proposition 5.2.13 of \cite[Chapter 5, \S 5.2]{Pedersenbook}). On the other hand, all points in the spectrum of a self-adjoint operator are approximate eigenvalues, by Theorem 1 of \cite[Lecture 18]{Bhatiabook}. Thus, (b) holds, by definition of approximate eigenvalue, see Points 2 and 5 of \cite[Lecture 17]{Bhatiabook}.

Now we only need to connect (a) to one element of (b,c,d).

\noindent (a)$\leftrightarrow$(d): Since $\phi_p(t)=e^{(p^0)^{-1}L t}\phi_p(0)$ (where the exponential is defined through spectral calculus \cite{Yosidabook}), equation \eqref{survival} can be equivalently rewritten as follows:
\begin{equation}
       \sup_{\phi_p \in \mathcal{H}_{\text{non-hy}}}  \dfrac{||e^{(p^0)^{-1}L t}\phi_p||}{|| \phi_p ||} =1 \, .
\end{equation}
This is, in turn, equivalent to saying that $e^{(p^0)^{-1}L t}$ is a bounded operator with norm $||e^{(p^0)^{-1}L t}||=1$. However, since $(p^0)^{-1}L$ is self-adjoint, also $e^{(p^0)^{-1}L t}$ is self-adjoint, by Theorem 3 of \cite[Chapter XI, \S 12]{Yosidabook}. Hence,
\begin{equation}
    || e^{(p^0)^{-1}L t} ||= \sup_{\phi_p \in \mathcal{H}_{\text{non-hy}}}  \dfrac{(\phi_p, e^{(p^0)^{-1}L t}\phi_p)}{(\phi_p,\phi_p)} = \sup_{\mathcal{H}_{\text{non-hy}}}  \text{Sp} \big[e^{(p^0)^{-1}L t}\big] \, ,
\end{equation}
where we have used Lemma 2.14  and Theorem 2.19 of \cite[Chapter 2]{Teschlbook}. Therefore, condition (a) is equivalent to the requirement that the supremum of the spectrum of $e^{(p^0)^{-1}L t}$ be equal to 1. On the other hand, condition (d) is equivalent to the requirement that the supremum of the spectrum of $(p^0)^{-1}L$ be equal to 0. Indeed, the two are related. In fact, applying Lemma 3.12 of \cite[Chapter 3, \S 3.2]{Teschlbook}, and considering that $f(x)=e^{tx}$ is continuous, we have
\begin{equation}
   \text{Sp}[e^{(p^0)^{-1}L t}]= \overline{e^{t \, \text{Sp}[(p^0)^{-1}L]} } \, ,
\end{equation}
which implies (recall that $t>0$)
\begin{equation}\label{infsup}
  \sup_{\mathcal{H}_{\text{non-hy}}}  \text{Sp} \big[e^{(p^0)^{-1}L t}\big] = \exp \bigg\{ t \sup_{ \mathcal{H}_{\text{non-hy}}} \text{Sp}[(p^0)^{-1}L] \bigg\} \, .  
\end{equation}
This shows that (a) [i.e. left-hand side of \eqref{infsup} equals 1] and (d) [i.e. right-hand side of \eqref{infsup} equals $e^0$] are equivalent. This completes our proof.
\end{proof}

\textbf{Remark 1:} Point (d) tells us that the restriction of $(p^0)^{-1}L$ to the non-hydrodynamic space $\mathcal{H}_{\text{non-hy}}$ possesses an infinite list of spectral points that accumulate at $0$. One may ask whether these spectral points ``survive'' when we go back to $\mathcal{H}$. The answer is, indeed, affirmative. In fact, all the spectral points of a self-adjoint operator are approximate eigenvalues. This means that $\lambda$ is a spectral point of the restriction of $(p^0)^{-1}L$ to $\mathcal{H}_{\text{non-hy}}$ if and only if there is a sequence of states $\phi_p^{(n)}\in \mathcal{H}_{\text{non-hy}}$ with norm 1 such that
\begin{equation}\label{appruxux}
    \lim_{n\rightarrow \infty} ||\big[(p^0)^{-1}L-\lambda \big]\phi_p^{(n)} ||=0 \, .
\end{equation}
Clearly, equation \eqref{appruxux} is still fulfilled by the same sequence $\phi_p^{(n)}$ when we regard $(p^0)^{-1}L$ as an operator on $\mathcal{H}$. Thus, $\lambda$ is a spectral point of $(p^0)^{-1}L$ also in $\mathcal{H}$.\\

\textbf{Remark 2:} Theorem 1 was proven, for clarity, under the assumption that the particle number is conserved. However, this assumption may be safely released. One just needs to make sure that conditions (7), (8), and (9) of the main text still hold, with the only exception that $L1$ now is non-vanishing, due to particle creation and annihilation processes. Also, note that, when particles are not conserved, the parameter $\alpha$ in the equilibrium distribution must be set to zero.

\newpage

\section{Massless particles with $\lambda \varphi^4$ interactions}\label{Appubobb}

Here, we prove that an ultrarelativistic gas with $\lambda \varphi^4$ interaction fulfills criterion (b) of Theorem \ref{theo1}.

In massless $\lambda \varphi^4$ kinetic theory (for classical particles), with coupling strength $g$, the states 
\begin{equation}\label{phiF4f}
\phi_p^{(n)} =L^{(5)}_n(\beta p^0) p^2 p^3  
\end{equation}
are eigenvectors of $L$, i.e. $L[L^{(5)}_n(\beta p^0) p^2 p^3]=\chi_{n2} \, L^{(5)}_n(\beta p^0) p^2 p^3$. The corresponding eigenvalues are
\begin{equation}
    \chi_{n2}= -\dfrac{g\mathcal{M}}{2} \dfrac{n+1}{n+3} \, ,
\end{equation}
which are bounded by the inequality $|\chi_{n2}|\leq g\mathcal{M}/2$.

For completeness, let us first prove that $\phi_p^{(n)}\in \mathcal{H}_{\text{non-hy}}$. This is straightforward to show in spherical coordinates, 
\begin{equation}
    p^\nu =
    \begin{bmatrix}
p \\
p \cos \theta \\
p \sin \theta \cos \varphi \\
p \sin \theta \sin \varphi \\
    \end{bmatrix} \, , \spc d^3p=p^2 dp \, \sin \theta d\theta \, d\varphi \, .
\end{equation}
In fact, in these coordinates, $p^{2} p^{3}  \propto \sin(2\varphi)$, and the inner products $(1,\phi_p^{(n)})$, $(p^0,\phi_p^{(n)})$, and $(p^1,\phi_p^{(n)})$ vanish, being proportional to $\int_0^{2\pi}\sin(2\varphi)d\varphi=0$. Also the inner products $(p^2,\phi_p^{n})$ and $(p^3,\phi_p^{n})$ vanish, since
\begin{equation}
    (\phi_p^{(n)}, p^2{+}ip^3) \propto \int_0^{2\pi} e^{i\varphi}\sin(2\varphi) d\varphi =0 \, .
\end{equation}
Therefore, all the functions $\phi_p^{(n)}$ are non-hydrodynamic. Now we need to study the behavior of the norm of $(p^0)^{-1}L\phi_p^{(n)}$ at large $n$. Since $||(p^0)^{-1}L\phi_p^{(n)}||=||\chi_{n2}(p^0)^{-1}\phi_p^{(n)}|| \leq \frac{1}{2}g\mathcal{M}|| (p^0)^{-1}\phi_p^{(n)}||$, we can write
\begin{equation}
\dfrac{|| (p^0)^{-1} L \phi_p^{(n)} ||^2}{||\phi_p^{(n)}||^2}\leq \bigg(\dfrac{g\mathcal{M}}{2}\bigg)^2 \dfrac{|| (p^0)^{-1} L_n^{(5)}(p^0/T) p^{ 2} p^{3 } ||^2}{|| L_n^{(5)}(p^0/T) p^{ 2} p^{3 }||^2} = \bigg(\dfrac{g\mathcal{M}}{2}\bigg)^2 \dfrac{ \displaystyle\int \dfrac{d^3 p}{(2\pi)^3(p^0)^2} e^{-p^0/T} \bigg[ L_n^{(5)}(p^0/T) p^{ 2} p^{3 }\bigg]^2}{\displaystyle\int \dfrac{d^3 p}{(2\pi)^3} e^{-p^0/T} \bigg[L_n^{(5)}(p^0/T) p^{ 2} p^{3 }\bigg]^2} \, .
\end{equation}
Expressing both integrals in spherical coordinates, and simplifying the common factors, we obtain
\begin{equation}
    \dfrac{|| (p^0)^{-1} L \phi_p^{(n)} ||^2}{||\phi_p^{(n)}||^2} \leq \bigg(\dfrac{g\mathcal{M}}{2T}\bigg)^2 \dfrac{ \displaystyle\int_0^{+\infty}   x^4  e^{-x} \big[ L_n^{(5)}(x) \big]^2 dx} {\displaystyle\int_0^{+\infty}   x^6  e^{-x} \big[ L_n^{(5)}(x) \big]^2 dx} =\dfrac{1}{10 (n{+}3)} \bigg(\dfrac{g\mathcal{M}}{2T}\bigg)^2 \xrightarrow[]{n\rightarrow +\infty} 0,
\end{equation}
which is what we wanted to prove.
To evaluate the integrals, we used the following general formula \cite{Mavormatis1990}:
\begin{equation}
    \int_0^{+\infty} x^{\nu} e^{-x} \big[ L_n^{(a)}(x) \big]^2 dx =\binom{n{+}a}{n} \binom{n{+}a{-}\nu{-}1}{n}\Gamma(\nu {+}1) \cdot {_3 F_2}(-n,\nu{+}1,\nu{-}a{+}1;a{+}1,\nu{-}a{-}n{+}1;1) \, ,
\end{equation}
which implies, in our case,
\begin{equation}
\begin{split}
\int_0^{+\infty} x^{4} e^{-x} \big[ L_n^{(5)}(x) \big]^2 dx ={}&  \dfrac{(n+5)!}{n!} \, \dfrac{1}{5} \, ,\\
\int_0^{+\infty} x^{6} e^{-x} \big[ L_n^{(5)}(x) \big]^2 dx ={}& \dfrac{(n+5)!}{n!} \, (2n+6)\, ,\\
\end{split}
\end{equation}
whose ratio is, indeed, $(10n{+}30)^{-1}$.

\newpage

\section{Bound on the entropy production used in the proof of Theorem 2}

We have the following sequence of entropy production rates:
\begin{equation}\label{varsigmona}
    \varsigma_n = \dfrac{(2\pi)^6}{8} 
\int \dfrac{d^3p}{(2\pi)^3 p^0} \dfrac{d^3p'}{(2\pi)^3p'^0} \dfrac{d^3q}{(2\pi)^3q^0} \dfrac{d^3q'}{(2\pi)^3q'^0} \, s \upvarsigma(s,\theta) \, \delta^4 (p{+}p'{-}q{-}q') \, f^{\text{eq}}_{p} f^{\text{eq}}_{p'} |\phi_p^{(n)}+\phi_{p'}^{(n)}-\phi_q^{(n)}-\phi_{q'}^{(n)}|^2  \, ,
\end{equation}
with equilibrium distribution $f^{\text{eq}}_p=e^{\alpha-\beta p^0}$, differential cross section $\upvarsigma(s,\theta) \leq A/s^a$ (where $A{\geq} 0$ and $0{\leq} a {\leq} 1$), and 
\begin{equation}\label{sequenceiazio}
    \phi_p^{(n)}= e^{\frac{\beta}{2}(1-n^{-1})p^0} \dfrac{2p^2 p^3}{(p^2)^2{+}(p^3)^2} \, . 
\end{equation}
Applying the Cauchy-Schwartz inequality to the dot product between $(1,1,-1,-1)$ and $(\phi_p^{(n)},\phi_{p'}^{(n)},\phi_q^{(n)},\phi_{q'}^{(n)})$, we obtain the following upper bound (which is an instance of AM-QM inequality):
\begin{equation}
  |\phi_p^{(n)}+\phi_{p'}^{(n)}-\phi_q^{(n)}-\phi_{q'}^{(n)}|^2 \leq 4 \bigg[|\phi_p^{(n)}|^2+|\phi_{p'}^{(n)}|^2+|\phi_q^{(n)}|^2+|\phi_{q'}^{(n)}|^2\bigg] 
\end{equation}
Therefore, recalling that $\upvarsigma(s,\theta) \leq A/s^a$, we have an upper bound on the entropy production rate,
\begin{equation}\label{veganism}
     \varsigma_n \leq  \dfrac{Ae^{2\alpha}}{2(2\pi)^{6}} 
\int \dfrac{d^3p}{ p^0} \dfrac{d^3p'}{p'^0} \dfrac{d^3q}{q^0} \dfrac{d^3q'}{q'^0} \, s^{1-a} \, \delta^4 (p{+}p'{-}q{-}q') \, e^{-\beta(p^0+p'^0)} \bigg[|\phi_p^{(n)}|^2+|\phi_{p'}^{(n)}|^2+|\phi_q^{(n)}|^2+|\phi_{q'}^{(n)}|^2\bigg] \, ,  
\end{equation}
which we can decompose into four parts:
\begin{equation}\label{vagn2}
\begin{split}
\varsigma_n \leq {}&  \dfrac{Ae^{2\alpha}}{2(2\pi)^{6}} 
\int \dfrac{d^3p}{ p^0} \dfrac{d^3p'}{p'^0} \dfrac{d^3q}{q^0} \dfrac{d^3q'}{q'^0} \, s^{1-a} \, \delta^4 (p{+}p'{-}q{-}q') \, e^{-\beta(p^0+p'^0)} |\phi_p^{(n)}|^2   \\
+ {}&  \dfrac{Ae^{2\alpha}}{2(2\pi)^{6}} 
\int \dfrac{d^3p}{ p^0} \dfrac{d^3p'}{p'^0} \dfrac{d^3q}{q^0} \dfrac{d^3q'}{q'^0} \, s^{1-a} \, \delta^4 (p{+}p'{-}q{-}q') \, e^{-\beta(p^0+p'^0)} |\phi_{p'}^{(n)}|^2   \\
+ {}&  \dfrac{Ae^{2\alpha}}{2(2\pi)^{6}} 
\int \dfrac{d^3p}{ p^0} \dfrac{d^3p'}{p'^0} \dfrac{d^3q}{q^0} \dfrac{d^3q'}{q'^0} \, s^{1-a} \, \delta^4 (p{+}p'{-}q{-}q') \, e^{-\beta(p^0+p'^0)} |\phi_q^{(n)}|^2   \\
+ {}&  \dfrac{Ae^{2\alpha}}{2(2\pi)^{6}} 
\int \dfrac{d^3p}{ p^0} \dfrac{d^3p'}{p'^0} \dfrac{d^3q}{q^0} \dfrac{d^3q'}{q'^0} \, s^{1-a} \, \delta^4 (p{+}p'{-}q{-}q') \, e^{-\beta(p^0+p'^0)} |\phi_{q'}^{(n)}|^2 \, .  \\
\end{split}
\end{equation}
In the second line, we perform the change on integration variables $(pp'qq')\rightarrow(p'pqq')$, finding that the first two lines are identical. Analogously, we perform the change of variables $(pp'qq')\rightarrow(pp'q'q)$ in the fourth line, finding that the last two lines are identical. Finally, we perform the change of variables $(pp'qq')\rightarrow(qq'pp')$ in the third line, and use the constraint $p^\mu+p'^\mu=q^\mu +q'^\mu$ imposed by the Dirac delta to show that the third line is identical to the first line\footnote{Note that $s=-(p^\mu {+}p'^\mu)(p_\mu {+}p'_\mu)=-(q^\mu {+}q'^\mu)(q_\mu {+}q'_\mu)$.}. Thus, all the lines are equal to each other, and we can write
\begin{equation}
    \varsigma_n \leq  \dfrac{2Ae^{2\alpha}}{(2\pi)^{6}} 
\int \dfrac{d^3p}{ p^0} \dfrac{d^3p'}{p'^0} \dfrac{d^3q}{q^0} \dfrac{d^3q'}{q'^0} \, s^{1-a} \, \delta^4 (p{+}p'{-}q{-}q') \, e^{-\beta(p^0+p'^0)} |\phi_p^{(n)}|^2  \, .
\end{equation}
Now, we bound the Mandelstam variable $s=-(p^\mu {+}p'^\mu)(p_\mu {+}p'_\mu)$ as follows\footnote{Applying the Cauchy-Schwartz inequality to the (Euclidean) dot product between $(m,\textbf{p})$ and $(m,-\textbf{p}')$, we find that $|m^2{-}p^j p'_j |\leq p^0 p'^0$.}:
   \begin{equation}
s=-p^\mu p_\mu -p'^\mu p'_\mu-2p^\mu p'_\mu =2m^2-2 p^j p'_j +2p^0 p'^0 \leq 4 p^0 p'^0 \, .
\end{equation}
Since $a$ does not exceed $1$, the quantity $s^{1-a}$ is a non-decreasing function of $s$, and we can set $s^{1-a}\leq (4 p^0 p'^0)^{1-a}$. We can also bound $|2p^2 p^3|$ with $(p^2)^2{+}(p^3)^2$ in \eqref{sequenceiazio}. Thus, we have
\begin{equation}\label{veganismo2}
\varsigma_n \leq  \dfrac{8 Ae^{2\alpha}}{4^a(2\pi)^{6}} 
\int \dfrac{d^3p}{ (p^0)^a} \dfrac{d^3p'}{(p'^0)^a} \dfrac{d^3q}{q^0} \dfrac{d^3q'}{q'^0} \,   \, \delta^4 (p{+}p'{-}q{-}q') \, e^{-\beta(p^0/n+p'^0)}  \, .  
\end{equation}
Defined $P^\mu =p^\mu +p'^\mu$, let us evaluate the integral
\begin{equation}
    K(P)= \int  \dfrac{d^3q}{q^0} \dfrac{d^3q'}{q'^0} \,   \, \delta^4 (P{-}q{-}q')  \, .
\end{equation}
Since the measures $d^3 q/q^0$ and $d^3 q/q^0$ are Lorentz-invariant, we can carry out the calculation in the center-of-momentum frame, where $P=(\sqrt{s},0,0,0)$. Then,
\begin{equation}
    K(P)= \int \dfrac{d^3q}{q^0} \dfrac{d^3q'}{q'^0} \,    \delta (\sqrt{s}{-}q^0{-}q'^0) \delta^3 (\textbf{q}+\textbf{q}') = \int  \dfrac{d^3q}{(q^0)^2}  \,    \delta (\sqrt{s}{-}2q^0) = 2\pi \sqrt{1-\dfrac{(2m)^2}{s}} \leq 2\pi \, .
\end{equation}
Thus, \eqref{veganismo2} becomes
\begin{equation}\label{produczionne}
\begin{split}
 \varsigma_n \leq {}&  \dfrac{8 Ae^{2\alpha}}{4^a(2\pi)^{5}} 
\int \dfrac{d^3p}{ (p^0)^a}   \, e^{-\beta p^0/n} \int \dfrac{d^3p'}{ (p'^0)^a}   \, e^{-\beta p'^0} \\ ={}& \dfrac{32 Ae^{2\alpha}}{4^a(2\pi)^3} 
 \int_m^{+\infty}  E^{1-a} e^{-\beta E/n} \sqrt{E^2{-}m^2} \, dE   \int_m^{+\infty}  E'^{1-a} e^{-\beta E'} \sqrt{E'^2{-}m^2} \, dE'   \, .\\
\end{split}
\end{equation}
The integrals can be bounded above replacing $m$ with $0$, and we finally obtain
\begin{equation}
 \varsigma_n \leq  \dfrac{32 Ae^{2\alpha} \Gamma(3{-}a)^2}{4^a(2\pi)^{3}\beta^{6-2a}} \,  n^{3-a} \, .
\end{equation}
On the other hand, the information density is given by
\begin{equation}
    E^0_n = \dfrac{1}{2} \int \dfrac{d^3 p}{(2\pi)^3} f^{\text{eq}}_p |\phi_p^{(n)}|^2 = \dfrac{e^\alpha}{(2\pi)^2} \int_0^{+\infty} dp \, p^2 \, e^{-\frac{\beta}{n} \sqrt{m^2+p^2}} \, .
\end{equation}
Since $\sqrt{m^2+p^2} \leq m+p$, we can bound below the information density as follows:
\begin{equation}\label{densitone}
  E^0_n \geq \dfrac{e^{\alpha - \frac{m\beta}{n}}}{(2\pi)^2} \int_0^{+\infty} dp \, p^2 \, e^{-\frac{\beta}{n} p} \geq \dfrac{2e^{\alpha - m\beta}}{(2\pi)^2 \beta^3} \, n^3 \, .   
\end{equation}
Taking the ratio between \eqref{produczionne} and \eqref{densitone}, we find that
\begin{equation}
    \dfrac{ \varsigma_n}{E_n^0} \leq \dfrac{Ae^{\alpha+m\beta} \Gamma(3{-}a)^2}{2^{2a-3} \pi\beta^{3-2a}} \,  n^{-a} \, ,
\end{equation}
which is what we wanted to prove.\\

\textbf{Remark:} While the function $\phi_p^{(\infty)}$ obtained as a limit of \eqref{sequenceiazio} does not belong to the Hilbert space $\mathcal{H}$ of the linear theory, it still constitutes (once multiplied by an infinitesimal parameter $\epsilon$) a sensible state in the non-linear regime. The corresponding distribution function is\footnote{A contribution of order $\epsilon^2$ is needed to guarantee that the full distribution function be non-negative at finite $\epsilon$. The interested reader can see section \ref{statemente} for more details.}
\begin{equation}
f_p^{(\infty)}= e^{\alpha-\beta p^0} + \epsilon \, e^{\alpha -\frac{\beta}{2}p^0} \dfrac{2p^2 p^3}{(p^2)^2{+}(p^3)^2} +\mathcal{O}(\epsilon^2) \, .
\end{equation}
As can be seen, the exponential factor $e^{\alpha -\frac{\beta}{2}p^0}$ in the excitation corresponds to a thermal state whose temperature is twice the temperature of the equilibrium state. This again shows that, in general, long-lived modes do not need to have a very large energy content. However, for an excitation to live longer than all soundwaves, $\phi_p$ should have unbounded support in $p$. 

\newpage

\section{Generalization of Theorem 2 to all soft interactions}

Here, we provide a generalization of Theorem 2, which is based on rigorous results about the essential spectrum of Boltzmann's collision operator in the case of soft interactions. Our main reference is \cite{Dudynski1988}.

\subsection{Duality of frameworks}\label{dualona}

First, let us compare our notation with that of \cite{Dudynski1988}. When constructing the relevant space of functions, \citet{Dudynski1988} do not use our linear degree of freedom $\phi_p$. Instead, their degree of freedom is\footnote{To see this, we note that, while we consider the decomposition $f_p=f^{\text{eq}}_p{+}f^{\text{eq}}_p \phi_p$, they adopt the decomposition $f_p=f^{\text{eq}}_p{+}\sqrt{f^{\text{eq}}_p} \,  \Phi_p$.} $\Phi_p=\sqrt{f^{\text{eq}}_p} \, \phi_p$, and the space of functions is $\tilde{\mathcal{H}}=L^2(\mathbb{R}^3,1)$, with inner product
\begin{equation}
    \langle \Psi_p,\Phi_p \rangle = \int \dfrac{d^3 p}{(2\pi)^3} \Psi^*_p \Phi_p \, . 
\end{equation}
It is immediate to see that, if $\Phi_p=\sqrt{f^{\text{eq}}_p} \, \phi_p$ and $\Psi_p=\sqrt{f^{\text{eq}}_p} \, \psi_p$, then the correpondening inner products coincide: 
\begin{equation}
    \langle \Psi_p,\Phi_p \rangle = \int \dfrac{d^3 p}{(2\pi)^3} \Psi^*_p \Phi_p = \int \dfrac{d^3 p}{(2\pi)^3} f^{\text{eq}}_p \psi^*_p \phi_p = (\psi_p,\phi_p)\, . 
\end{equation}
In other words, the mapping $\Phi_p=\sqrt{f^{\text{eq}}_p} \, \phi_p$ is an isometry between our Hilbert space $\mathcal{H}$ and the Hilbert space $\Tilde{\mathcal{H}}$ of \cite{Dudynski1988}. Furthermore, $\phi_p \in \mathcal{H}$ if and only if $\Phi_p \in \Tilde{\mathcal{H}}$. Indeed, it should be noted that \citet{Dudynski1988} defined $\tilde{\mathcal{H}}$ as the set of states with finite free energy\footnote{Actually, they use the expression ``finite entropy'', but the two are equivalent in the present setting \cite{GavassinoCausality2021}.}, which is also our definition for the space $\mathcal{H}$. 

Let us now focus on the collision operators. In the homogenous limit, the equation of motion considered by \cite{Dudynski1988} has the form $\partial_t \Phi_p=\mathcal{L}\Phi_p$. Comparing this with our equation of motion $\partial_t \phi_p=(p^0)^{-1}L\phi_p$, we see that their operator $\mathcal{L}$ and our operator $L$ are related by the identity (the right side should be interpreted as a composition of operators)
\begin{equation}
\mathcal{L}= \sqrt{f^{\text{eq}}_p} \, \dfrac{1}{p^0} L \, \dfrac{1}{\sqrt{f^{\text{eq}}_p}} \, .
\end{equation}
Hence, the matrix elements $\langle \Psi_p,\mathcal{L} \Phi_p \rangle$ in $\tilde{\mathcal{H}}$ coincide with the correponding matrix elements $\big( \psi_p,(p^0)^{-1}L\phi_p \big)$ in $\mathcal{H}$. The proof is straightforward:
\begin{equation}
\begin{split}
\langle \Psi_p,\mathcal{L}\Phi_p \rangle ={}& \int \dfrac{d^3 p}{(2\pi)^3} \Psi^*_p \mathcal{L} \Phi_p \\
={}& \int \dfrac{d^3 p}{(2\pi)^3} \sqrt{f^{\text{eq}}_p} \, \psi^*_p \sqrt{f^{\text{eq}}_p} \, \dfrac{1}{p^0} L \, \dfrac{1}{\sqrt{f^{\text{eq}}_p}} \sqrt{f^{\text{eq}}_p} \, \phi_p \\
={}& \int \dfrac{d^3 p}{(2\pi)^3} f^{\text{eq}}_p \, \psi^*_p \, \dfrac{1}{p^0} L  \phi_p = \bigg(\psi_p, \dfrac{1}{p^0} L \phi_p  \bigg) \, . \\
\end{split}
\end{equation}
Indeed, both $\langle \Phi_p,\mathcal{L}\Phi_p \rangle$ and $\big( \phi_p,(p^0)^{-1}L\phi_p \big)$ are the negative value of the entropy production rate. Finally, let us note that, called $||\Phi_p ||_{DE}^2 \equiv \langle \Phi_p, \Phi_p \rangle=(\phi_p,\phi_p)=||\phi_p||^2$ the Hilbert norm in $\Tilde{\mathcal{H}}$, we have the following identities:
\begin{equation}\label{fourtisixxxx}
||\mathcal{L}\Phi_p||^2_{DE}=\langle \mathcal{L}\Phi_p,\mathcal{L}\Phi_p \rangle =\bigg\langle \sqrt{f^{\text{eq}}_p} \, \dfrac{1}{p^0} L \phi_p, \sqrt{f^{\text{eq}}_p} \, \dfrac{1}{p^0} L \phi_p \bigg\rangle = \bigg( \dfrac{1}{p^0} L \phi_p,  \, \dfrac{1}{p^0} L \phi_p \bigg) =|| (p^0)^{-1}L\phi_p||^2 \, . 
\end{equation}
This also implies that $\mathcal{L}$ is defined on $\Phi_p$ if and only if $(p^0)^{-1}L$ is defined on the corresponding state $\phi_p$.

In conclusion, $\mathcal{H}$ and $\tilde{\mathcal{H}}$ are different representations of the same physical Hilbert space, just like, in quantum mechanics, one can represent the same physical wavefunction in position space or in momentum space, without affecting the value of the physical amplitudes. It follows that any result of \cite{Dudynski1988} that can be expressed solely in terms of inner products of physical states must be valid also within our representation.

\newpage
\subsection{Coincidence of the spectra}

Let us use the above duality to prove that the spectrum of $\mathcal{L}$ in $\tilde{\mathcal{H}}$ coincides with the spectrum of $(p^0)^{-1}L$ in $\mathcal{H}$. Since both $\mathcal{L}$ and $(p^0)^{-1}L$ are self-adjoint, in the respective spaces, all points in the spectrum are approximate eigenvalues, by Theorem 1 of \cite[Lecture 18]{Bhatiabook}. This means that a number $\lambda \in \mathbb{R}$ belongs to the spectrum of $\mathcal{L}$ if and only if there is a sequence of states $\Phi^{(n)}_p$ with non-vanishing norm such that
\begin{equation}
    \lim_{n\rightarrow \infty} \dfrac{||(\mathcal{L}-\lambda) \Phi_p^{(n)} ||_{DE}}{|| \Phi^{(n)}_p||_{DE}} =0 \, .
\end{equation}
On the other hand, the corresponding states $\phi_p^{(n)}$ are such that
\begin{equation}
    \lim_{n\rightarrow \infty} \dfrac{||(\mathcal{L}{-}\lambda) \Phi_p^{(n)} ||_{DE}}{|| \Phi^{(n)}_p||_{DE}} = \lim_{n\rightarrow \infty} \dfrac{||\big[(p^0)^{-1}L{-}\lambda\big] \phi_p^{(n)} ||}{|| \phi^{(n)}_p||}  \, ,
\end{equation}
by the duality established above, see, e.g., equation \eqref{fourtisixxxx}. It follows that $\lambda$ is an approximate eigenvalue of $\mathcal{L}$ if and only if it is an approximate eigenvalue of $(p^0)^{-1}L$, proving that the spectra are, indeed, the same.

\subsection{Soft interactions with finite mass}

In \cite{Dudynski1988}, it was shown that, if $m\neq 0$, then $\mathcal{L}$ is a bounded operator for all soft interactions, namely for all collision cross-sections $\upvarsigma(s,\theta)$ such that 
\begin{equation}\label{groupszz}
    \upvarsigma(s,\theta)< \dfrac{B\sin^c \theta}{(s{-}4m^2)^a} \, ,
\end{equation}
where $a,B,c$ are constants, with $0<a<2$, $B>0$, and $c>-2$. Given the duality discussed above, we conclude that also $(p^0)^{-1}L$ is bounded, with the same norm. Moreover, \citet{Dudynski1988} showed that the essential spectrum of $\mathcal{L}$ (again, for soft interactions with $m \neq 0$) always takes the form $[-D,0]$, for some constant $D>0$. Since the essential spectrum is a subset of the spectrum, and since the spectrum of $\mathcal{L}$ coincides with the spectrum of $(p^0)^{-1}L$, we conclude that the spectrum of $(p^0)^{-1}L$ contains the interval $[-D,0]$. Therefore, all soft interactions fulfill criterion (d) of our Theorem 1, and facts (a,b,c) automatically follow.

The above derivation provides an additional characterization of the spectrum of massive gases with soft interactions. In fact, strictly speaking, statement (d) of Theorem 1 only guarantees the existence of a sequence of eigenfrequencies that accumulate in 0. However, if \eqref{groupszz} holds, such eigenfrequencies form a continuum, since they cover the interval $\omega \in -i[0,D]$. Thus, in these systems, there is really a branch cut, at least for finite mass.

We emphasize that the distinction between a continuous cut and an infinite sequence of discrete poles $\omega_n \rightarrow 0$ is merely academic and impossible to detect experimentally. In fact, one would need infinite resolution in $\omega$ to measure the separation of all discrete poles $\omega_n$ in a neighborhood of 0. This would take infinite time, due to the uncertainty relation $\Delta t \, \Delta \omega \gtrsim 1$ in Fourier analysis.

\newpage

\section{Non-linear considerations}

All our calculations in this article were performed in the linear regime. In most physics discussions, it is taken for granted that the properties of the linear theory should reflect the non-linear behavior in proximity to the unperturbed state. Indeed, this is a safe assumption in most cases. However, in a problem like ours, where many different limits are considered, and the order of limits in general matters, the reliability of the linear theory is not unquestionable. Hence, it would be useful to have a rigorous proof that, for each of the linear states considered in the present article, there exists a corresponding family of non-linear states whose features are well approximated by the given linear state near equilibrium. Here, we provide such a proof.

\subsection{Statement of the problem}\label{statemente}

First, let us state precisely what needs to be proven. Given that our study concerns the properties of the linear operator $(p^0)^{-1}L$, and how it acts on certain predetermined states [like \eqref{exciuto}, \eqref{exciutone}, \eqref{phiF4f}, and \eqref{sequenceiazio}], we must show that, for any given linear state $\phi_p \in \mathcal{H}$ of interest (with $\phi_p \in \mathbb{R}$), it is possible to build a one-parameter family of non-linear states $f_p(\epsilon)$, with $f_p(0)=f^\text{eq}_p$, and such that the following two facts hold:
\begin{equation}\label{grinizo}
\begin{split}
\dfrac{f_p(\epsilon){-}f^{\text{eq}}_p}{\epsilon} \xrightarrow{\epsilon \rightarrow 0} {}&f^{\text{eq}}_p \phi_p \, , \\
\dfrac{1}{p^0}\dfrac{\mathcal{C}_p(\epsilon)}{\epsilon} \xrightarrow{\epsilon \rightarrow 0} {}& f^{\text{eq}}_p \dfrac{1}{p^0} L\phi_p \, , \\
\end{split}
\end{equation}
where $\mathcal{C}_p(\epsilon)$ is the non-linear collision integral evaluated on $f_p(\epsilon)$. Now, two important subtleties need to be mentioned. The first subtlety is that, since we want each $f_p(\epsilon)$ to represent a physically meaningful non-linear state, we need to make sure that $f_p(\epsilon) \geq 0$ (for all $p$), at least within a \textit{finite} neighborhood of $\epsilon=0$. This implies that, in general, we cannot consider the family $f_p(\epsilon)=f^{\text{eq}}_p(1{+}\epsilon \phi_p)$, because $1{+}\epsilon \phi_p$ may fail to be non-negative in any finite neighborhood of $\epsilon=0$. For example, take $\phi_p=-p^0 \in \mathcal{H}$. No matter how small $\epsilon>0$ is, the function $f_p =f_p^{\text{eq}}(1-\epsilon p^0)$ will always become negative for $p^0>1/\epsilon$. To avoid this inconvenience, we will consider the ``regularised'' family
\begin{equation}\label{regularizzzzzed}
f_p(\epsilon)=f^\text{eq}_p \bigg[1+\epsilon \phi_p \, e^{-(\epsilon \phi_p)^2}\bigg] \, ,
\end{equation}
which fulfills the following bounds (valid for any $\phi_p \in \mathbb{R}$ and for any $\epsilon \in \mathbb{R}$): 
\begin{equation}
\bigg(1{-}\dfrac{1}{\sqrt{2e}} \bigg) f^\text{eq}_p \leq f_p(\epsilon) \leq \bigg(1{+}\dfrac{1}{\sqrt{2e}} \bigg) f^\text{eq}_p \, . 
\end{equation}
This guarantees not only that $f_p(\epsilon)$ is always positive, but also that physical observables like the energy and the particle number are finite, no matter how large $\phi_p$ or $\epsilon$ is. Therefore, all states $f_p(\epsilon)$ are physically accessible. Additionally, we note that the energy content of such states is not particularly high, since the ratio $f_p(\epsilon)/f^\text{eq}_p$ is always of order 1.

The second subtle issue is how we should interpret the limits in equation \eqref{grinizo}. It may be tempting to require the limits to hold for each fixed $p$ (i.e. ``pointwise'' in $p$). However, pointwise convergence is not a good measure of distance, because it does not capture global features (such as integrals \cite[Ch 7, Example 7.6]{RudinAnalysis_book}). Instead, since we are mostly interested in the collective motion of the gas, and since $f_p$ (being a distribution) has physical significance only inside momentum integrals \cite[\S I.1]{Groot1980RelativisticKT}, it is more natural to interpret the limit in \eqref{grinizo} as a convergence in $L^1$ norm\footnote{Note that, if \eqref{grinizo} holds in $L^1$ norm, then we can find a discrete sequence of values $\epsilon_n \rightarrow 0$ along which \eqref{grinizo} holds also pointwise almost everywhere, and almost uniformly \cite[Corollary 1.5.10]{TaoMeasureTheory2011}.}, the latter being defined as follows:
\begin{equation}
||A_p||_{L^1} =\int \dfrac{d^3 p}{(2\pi)^3}  |A_p| \,  .
\end{equation}
This is the perfect measure of distance for our purposes, because it ``counts the particles''\footnote{Indeed, $||f_p||_{L^1}$ is just the number density.}. For example, if two distribution functions $f_p$ and $g_p$ are such that $|| f_p-g_p||_{L^1}=X$, then we may interpret $X$ as the number of particles (per unit volume) that we need to ``disturb'' if we want to move from $f_p$ to $g_p$. Therefore, if we manage to prove that property \eqref{grinizo} holds in $L^1$ norm, we will be entitled to say that, for small $\epsilon$, the non-linear theory differs from the linear theory by a negligible number of particles.

\subsection{Convergence of the state}
\vspace{-0.2cm}

Let us prove the first limit in \eqref{grinizo}, for the family of states \eqref{regularizzzzzed}. The function of interest is
\begin{equation}
\dfrac{f_p(\epsilon){-}f^{\text{eq}}_p}{\epsilon} = f^{\text{eq}}_p \phi_p \, e^{-(\epsilon \phi_p)^2} \, .
\end{equation}
Clearly, it converges to $f^{\text{eq}}_p\phi_p$ in a pointwise sense. Does the convergence hold also in $L^1$ norm? To check this, we need to calculate the following limit:
\begin{equation}\label{todominate}
\lim_{\epsilon \rightarrow 0} \bigg|\bigg| \dfrac{f_p(\epsilon)-f_p^{\text{eq}}}{\epsilon} - f^{\text{eq}}_p\phi_p \bigg| \bigg|_{L^1}  = \lim_{\epsilon \rightarrow 0} \int \dfrac{d^3 p}{(2\pi)^3}  f^{\text{eq}}_p |\phi_p| \bigg[1- e^{-(\epsilon \phi_p)^2}\bigg] \, .
\end{equation}
Now, let us note that, since $\phi_p \in \mathcal{H}$, we can apply the Cauchy-Schwartz inequality in $\mathcal{H}$ to show that 
\begin{equation}
\int \dfrac{d^3 p}{(2\pi)^3} f^{\text{eq}}_p |\phi_p| =\big(1,|\phi_p| \big) \leq ||1|| \times  ||\phi_p ||<\infty \, .
\end{equation}
Thus, we can use the dominated convergence theorem in \eqref{todominate} to move the limit inside the sign of integration (dominating the square bracket with 1). Then, we have
\begin{equation}
\lim_{\epsilon \rightarrow 0} \bigg|\bigg| \dfrac{f_p(\epsilon)-f_p^\text{eq}}{\epsilon} - f^{\text{eq}}_p\phi_p \bigg| \bigg|_{L^1}  = \int \dfrac{d^3 p}{(2\pi)^3}  f^{\text{eq}}_p |\phi_p| \bigg[1- \lim_{\epsilon \rightarrow 0} e^{-(\epsilon \phi_p)^2}\bigg] = \int \dfrac{d^3 p}{(2\pi)^3} f^{\text{eq}}_p |\phi_p| \bigg[1- 1\bigg]=0\, ,
\end{equation}
which is what we wanted to prove. This shows that, in the limit of small $\epsilon$, the non-linear distribution function $f_p(\epsilon)$ is described with arbitrarily good accuracy by the linear theory (through the linearised state $\phi_p$). With a similar argument as above, one can also prove that all the moments $\rho^{\mu \alpha_1 ... \alpha_n}(\epsilon)$ are well approximated by the linear theory.

\vspace{-0.2cm}
\subsection{Convergence of the collision integral}
\vspace{-0.2cm}

Let us now discuss the second limit in \eqref{grinizo}, which is a bit harder. Invoking the bilinear structure of Boltzmann's integral and the definition of $L$ \cite[\S 9.3.1]{cercignani_book}, it is straightforward to show that
\begin{equation}
\dfrac{\mathcal{C}_p(\epsilon)}{\epsilon} = f_p^\text{eq}L [\phi_p e^{-(\epsilon\phi_p)^2}]+\epsilon \, \mathcal{C}[f_p^{\text{eq}}\phi_p e^{-(\epsilon\phi_p)^2}] \, .
\end{equation}
Hence, we have the following relations:
\begin{equation}\label{qqurello}
\begin{split}
\lim_{\epsilon \rightarrow 0} \bigg|\bigg| \dfrac{1}{p^0} \dfrac{\mathcal{C}(\epsilon)}{\epsilon} - f^{\text{eq}}_p \dfrac{1}{p^0} L\phi_p \bigg| \bigg|_{L^1}  ={}& \lim_{\epsilon \rightarrow 0} \bigg|\bigg| -f_p^\text{eq} \dfrac{1}{p^0}L [\phi_p(1- e^{-(\epsilon\phi_p)^2})]+\epsilon \,  \dfrac{1}{p^0}\mathcal{C}[f_p^{\text{eq}}\phi_p e^{-(\epsilon\phi_p)^2}] \bigg| \bigg|_{L^1} \\
\leq {}& \lim_{\epsilon \rightarrow 0} \bigg|\bigg| f_p^\text{eq} \dfrac{1}{p^0} L [\phi_p(1- e^{-(\epsilon\phi_p)^2})] \bigg| \bigg|_{L^1}+ \lim_{\epsilon \rightarrow 0} \epsilon \bigg| \bigg| \dfrac{1}{p^0} \mathcal{C}[f_p^{\text{eq}}\phi_p e^{-(\epsilon\phi_p)^2}] \bigg| \bigg|_{L^1} \, ,
\end{split}
\end{equation}
where we used the triangle inequality of $L^1$. Let us study each of the two limits on the second line separately.

\subsubsection{The first limit}

Introducing the short-hand notation $\Bar{\phi}_p =\phi_p(1- e^{-(\epsilon\phi_p)^2})$,
we have the following chain of relations:
\begin{equation}\label{fiursto}
\begin{split}
\bigg|\bigg| f_p^\text{eq} \dfrac{1}{p^0}L\Bar{\phi}_p \bigg| \bigg|_{L^1}={}& \int \dfrac{d^3 p}{(2\pi)^3 p^0} f^\text{eq}_p \bigg| \int \dfrac{d^3p'}{(2\pi)^3p'^0} \dfrac{d^3q}{(2\pi)^3q^0} \dfrac{d^3q'}{(2\pi)^3q'^0} W_{pp'\leftrightarrow qq'} f^{\text{eq}}_{p'}(\Bar{\phi}_q{+}\Bar{\phi}_{q'}{-}\Bar{\phi}_p{-}\Bar{\phi}_{p'}) \bigg| \\  
\leq {}& \int \dfrac{d^3 p}{(2\pi)^3 p^0}   \dfrac{d^3p'}{(2\pi)^3p'^0} \dfrac{d^3q}{(2\pi)^3q^0} \dfrac{d^3q'}{(2\pi)^3q'^0} W_{pp'\leftrightarrow qq'} f^\text{eq}_p f^{\text{eq}}_{p'}(|\Bar{\phi}_q|{+}|\Bar{\phi}_{q'}|{+}|\Bar{\phi}_p|{+}|\Bar{\phi}_{p'}|)  \\ 
= {}& 4\int \dfrac{d^3 p}{(2\pi)^3 p^0}   \dfrac{d^3p'}{(2\pi)^3p'^0} \dfrac{d^3q}{(2\pi)^3q^0} \dfrac{d^3q'}{(2\pi)^3q'^0} W_{pp'\leftrightarrow qq'} f^\text{eq}_p f^{\text{eq}}_{p'}|\Bar{\phi}_q| \\ 
= {}& 4\int \dfrac{d^3 p}{(2\pi)^3}   \dfrac{d^3p'}{(2\pi)^3} \dfrac{F}{p^0p'^0} \sigma_{\text{tot}}(s)  f^\text{eq}_p f^{\text{eq}}_{p'}|\Bar{\phi}_q| \\
= {}& 4\int \dfrac{d^3 p}{(2\pi)^3 }     f^\text{eq}_p|\nu_p\Bar{\phi}_q| =4(1,|\nu_p\Bar{\phi}_p|) \, . \\
\end{split} 
\end{equation}
In the first line, we used the definition of $L$. In the second line, we used the fact that the absolute value of a sum cannot exceed the sum of the absolute values. In the third line, we performed changes or variables analogous to those that lead from \eqref{veganism} to \eqref{vagn2}. In the fourth line, we invoked the definition of invariant flux $F$ and total cross-section $\sigma_{\text{tot}}(s)$, see \cite[\S I.2.d]{Groot1980RelativisticKT}. In the last line, we invoked the definition of collisional frequency $\nu_p \geq 0$ \cite{Strain2010}.

Now, when $\epsilon \rightarrow 0$, the norm of $\phi_p(1- e^{-(\epsilon\phi_p)^2})$ converges to 0. Again, this is because $|\phi_p|$ belongs to $\mathcal{H}$, and thus we can apply the dominated convergence theorem (dominating the round bracket with 1). By contrast, the function $|\nu_p \phi_p|$ may not have a finite norm in general. However, we recall that our Theorem 1 (and thus all our analysis), is concerned with states $\phi_p$ such that $(p^0)^{-1}L\phi_p$ belongs to $\mathcal{H}$. Using the correspondence with \cite{Dudynski1988}, it is straightforward to show that this requires that the norm of $|\nu_p\phi_p|$ be finite. Thus, we can again apply the dominated convergence theorem, and conclude that also the norm of $\nu_p \phi_p(1- e^{-(\epsilon\phi_p)^2})$ converges to 0 at small $\epsilon$. Therefore, we have that
\begin{equation}\label{rngomk}
\lim_{\epsilon \rightarrow 0} \bigg|\bigg| f_p^\text{eq} \dfrac{1}{p^0}L [\phi_p(1- e^{-(\epsilon\phi_p)^2})] \bigg| \bigg|_{L^1} \leq 4 \lim_{\epsilon \rightarrow 0} \bigg(1,|\nu_p\phi_p(1-e^{-(\epsilon\phi_p)^2})| \bigg) =0 \, .
\end{equation}

\subsubsection{The second limit}

We have the following chain of relations:
\begin{equation}
\begin{split}
\bigg| \bigg| \dfrac{1}{p^0} \mathcal{C}[f_p^{\text{eq}}\phi_p e^{-(\epsilon\phi_p)^2}] \bigg| \bigg|_{L^1}{=}{}& \int \dfrac{d^3 p}{(2\pi)^3p^0}  \bigg|
 \int \dfrac{d^3p'}{(2\pi)^3p'^0} \dfrac{d^3q}{(2\pi)^3q^0} \dfrac{d^3q'}{(2\pi)^3q'^0} W_{pp'\leftrightarrow qq'} f^{\text{eq}}_{p} f^{\text{eq}}_{p'}(\phi_q\phi_{q'} e^{-\epsilon^2 (\phi_q^2+\phi_{q'}^2)}{-}\phi_p\phi_{p'} e^{-\epsilon^2 (\phi_p^2+\phi_{p'}^2)}) \bigg| \\
 {\leq} {}& \int \dfrac{d^3 p}{(2\pi)^3 p^0}\dfrac{d^3p'}{(2\pi)^3p'^0} \dfrac{d^3q}{(2\pi)^3q^0} \dfrac{d^3q'}{(2\pi)^3q'^0} W_{pp'\leftrightarrow qq'}  f^{\text{eq}}_{p} f^{\text{eq}}_{p'}(|\phi_q\phi_{q'}| e^{-\epsilon^2 (\phi_q^2+\phi_{q'}^2)}{+}|\phi_p\phi_{p'}| e^{-\epsilon^2 (\phi_p^2+\phi_{p'}^2)})  \\
 {\leq} {}& \int \dfrac{d^3 p}{(2\pi)^3 p^0}\dfrac{d^3p'}{(2\pi)^3p'^0} \dfrac{d^3q}{(2\pi)^3q^0} \dfrac{d^3q'}{(2\pi)^3q'^0} W_{pp'\leftrightarrow qq'}  f^{\text{eq}}_{p} f^{\text{eq}}_{p'}(|\phi_q\phi_{q'}|{+}|\phi_p\phi_{p'}| )  \\
\leq {}& \dfrac{1}{2} \int \dfrac{d^3 p}{(2\pi)^3 p^0}\dfrac{d^3p'}{(2\pi)^3p'^0} \dfrac{d^3q}{(2\pi)^3q^0} \dfrac{d^3q'}{(2\pi)^3q'^0} W_{pp'\leftrightarrow qq'}  f^{\text{eq}}_{p} f^{\text{eq}}_{p'}(|\phi_q|^2 {+}|\phi_{q'}|^2{+}|\phi_p|^2{+}|\phi_{p'}|^2 )  \\
= {}& 2 \int \dfrac{d^3 p}{(2\pi)^3 p^0}\dfrac{d^3p'}{(2\pi)^3p'^0} \dfrac{d^3q}{(2\pi)^3q^0} \dfrac{d^3q'}{(2\pi)^3q'^0} W_{pp'\leftrightarrow qq'} f^{\text{eq}}_{p} f^{\text{eq}}_{p'}|\phi_p|^2  \\
={}&  2   \int \dfrac{d^3 p}{(2\pi)^3}\dfrac{d^3p'}{(2\pi)^3} \dfrac{F}{p^0p'^0} \sigma_{\text{tot}}(s)  f^{\text{eq}}_{p} f^{\text{eq}}_{p'}|\phi_p|^2  \\
={}&  2   \int \dfrac{d^3 p}{(2\pi)^3}f^{\text{eq}}_{p} \nu_p |\phi_p|^2 =2(\phi_p,\nu_p \phi_p)\, . \\
\end{split}  
\end{equation}
In the first line, we wrote $\mathcal{C}$ explicitly. In the second line, we used the fact that the absolute value of a sum cannot exceed the sum of the absolute values. In the third line, we use the fact that $e^{-\epsilon^2 \phi^2_p}\leq 1$. In the fourth line, we used the well-known identity $2|ab|\leq |a|^2 +|b|^2$. In the remaining lines, we followed analogous steps to those in \eqref{fiursto}.

Again, let us recall that, in the cases we are interested in, both $\phi_p$ and $\nu_p \phi_p$ belong to $\mathcal{H}$, and hence $(\phi_p,\nu_p\phi_p)$ is finite. Then, we can write
\begin{equation}\label{unstoppable}
\lim_{\epsilon \rightarrow 0} \epsilon \bigg| \bigg| \dfrac{1}{p^0} \mathcal{C}[f_p^{\text{eq}}\phi_p e^{-(\epsilon\phi_p)^2}] \bigg| \bigg|_{L^1} \leq 2(\phi_p,\nu_p\phi_p) \lim_{\epsilon \rightarrow 0} \epsilon  =0 \, . 
\end{equation}

\subsubsection{Putting everything together}

Combining \eqref{qqurello}, \eqref{rngomk}, and \eqref{unstoppable}, we finally conclude that
\begin{equation}
\lim_{\epsilon \rightarrow 0} \bigg|\bigg| \dfrac{1}{p^0} \dfrac{\mathcal{C}(\epsilon)}{\epsilon} - f^{\text{eq}}_p \dfrac{1}{p^0} L\phi_p \bigg| \bigg|_{L^1}  = 0 \, ,   
\end{equation}
which is what we wanted to prove. This shows that, in the limit of small $\epsilon$, the action of $\mathcal{C}$ on the non-linear state $f_p(\epsilon)$ is approximated by the action of $L$ on $\phi_p$ to arbitrarily good accuracy.

\label{lastpage}

\end{document}